%% file: A_main.tex
\documentclass[conference]{IEEEtran}
\IEEEoverridecommandlockouts
\usepackage{graphicx}
\usepackage{textcomp}
\usepackage{booktabs}
\usepackage{amsmath}
\usepackage{amsfonts}
\usepackage{amsthm}
\usepackage{thmtools, thm-restate}
\usepackage{algorithm}
\usepackage{algpseudocode}
\usepackage{csquotes}
\usepackage{enumitem}
\usepackage{multirow}
\usepackage{multicol}
\usepackage{tabularx}
\usepackage{array}
\usepackage{xcolor}
\usepackage{subcaption}
\usepackage{flushend}
\usepackage{comment}
\usepackage{mathrsfs}
\usepackage{gensymb}
\usepackage{bbold}
\usepackage{nicefrac}
\usepackage{nicematrix}
\usepackage[normalem]{ulem}

\newif\ifediting
\editingtrue  
\input{macros/comments.tex} 
\input{macros}
\newtheorem{definition}{Definition}

\newtheorem{corollary}{Corollary}
\newtheorem{lemma}{Lemma}

\newenvironment{spmatrix}[1]
 {\def\mysubscript{#1}\mathop\bgroup\begin{pmatrix}}
 {\end{pmatrix}\egroup_{\textstyle\mathstrut\mysubscript}}

\def \calA{{\mathcal{A}}}
\def \calX{{\mathcal{X}}}
\def \calY{{\mathcal{Y}}}

\def \setOne{\mathbb{1}}
\def \eeps{e^{\varepsilon}}

\def \cm{{\mathcal{C}}}
\def \grrm{{\mathcal{R}}}
\def \ssm{{\mathcal{SS}}}
\def \ouem{{\mathcal{OUE}}}
\def \suem{{\mathcal{SUE}}}
\def \lhm{{\mathcal{LH}}}
\def \them{{\mathcal{T}}}
\def \Dist{{\mathbb{D}}}
\def \maxrefby {\sqsubseteq^{\mathrm{MAX}}}
\def \refby {\sqsubseteq}

\def \BayesCap{{\mathcal{M}\mathcal{L}}}
\def \gLeak{{\mathcal{L}}_g}
\def \gVuln{\ensuremath{V_g}}

\usepackage{hyperref} 
\def\BibTeX{{\rm B\kern-.05em{\sc i\kern-.025em b}\kern-.08em
    T\kern-.1667em\lower.7ex\hbox{E}\kern-.125emX}}
\begin{document}

\makeatletter
\newcommand{\linebreakand}{%
  \end{@IEEEauthorhalign}
  \hfill\mbox{}\par
  \mbox{}\hfill\begin{@IEEEauthorhalign}
}
\makeatother

\title{Beyond Epsilon: A Principled QIF Framework for Local Differential Privacy}

\author{\IEEEauthorblockN{Ramon G. Gonze\textsuperscript{$\ddagger$}\thanks{\textsuperscript{$\ddagger$}These authors contributed equally to this work.}}
\IEEEauthorblockA{\textit{Institut Polytechnique de Paris} \\
\textit{Universidade Federal de Minas Gerais}\\
Paris, France -- Belo Horizonte, Brazil \\
ramon.gonze@inria.fr}
\and
\IEEEauthorblockN{Natasha Fernandes\textsuperscript{$\ddagger$}}
\IEEEauthorblockA{\textit{Macquarie University} \\
Sydney, NSW Australia \\
natasha.fernandes@mq.edu.au}
\and
\IEEEauthorblockN{Héber H. Arcolezi\textsuperscript{$\P$}\thanks{\textsuperscript{$\P$}Work done while at Inria.}}
\IEEEauthorblockA{\textit{ÉTS Montréal}, Montréal, Canada \\
\textit{Inria, Grenoble, France} \\
heber.hwang-arcolezi@etsmtl.ca}
\linebreakand
\IEEEauthorblockN{Catuscia Palamidessi}
\IEEEauthorblockA{\textit{Inria, Institut Polytechnique de Paris} \\
Paris, France \\
catuscia@lix.polytechnique.fr}
\and
\IEEEauthorblockN{Nataliia Bielova}
\IEEEauthorblockA{\textit{Inria Centre at University Côte d’Azur} \\
Sophia Antipolis, France \\
nataliia.bielova@inria.fr}
}

\maketitle

\begin{abstract}
Local Differential Privacy (LDP) has become the de facto standard for privacy-preserving data collection in large-scale systems, in particular for the purpose of estimating frequencies. However, the current research landscape lacks a systematic and principled way to compare LDP protocols. The parameter $\varepsilon$ of LDP is considered the measure of
privacy, but it only bounds worst-case distinguishability.   Other comparisons rely on utility-driven analyses, where mechanisms are ranked based on their ability to preserve data utility for a given privacy budget $\varepsilon$. Both such kinds of comparisons fail to account for the strength of protocols against diverse attacker models. 
In this paper, we propose a framework for analyzing LDP frequency estimation protocols through the lens of Quantitative Information Flow (QIF). By modeling LDP mechanisms as probabilistic channels, we leverage the concept of refinement (Blackwell ordering) to establish more principled classifications. This approach allows us to determine when one protocol is intrinsically superior to another for all possible adversaries, and to discuss the implications for utility. In particular, our analysis uncovers cases where protocols previously deemed ``optimal'' are, in fact, incomparable with, or strictly dominated by, other protocols. 
We provide a formal QIF-based treatment of seven state-of-the-art LDP protocols, including Generalized Randomized Response (GRR), Subset Selection (SS), local hashing variants (BLH, OLH), unary encoding schemes (SUE, OUE), and Thresholding with Histogram Encoding (THE).  
This perspective
bridges the gap between the LDP and formal methods communities and enables principled, adversary-aware reasoning about
locally private systems.


\end{abstract}

\begin{IEEEkeywords}
Local Differential Privacy, Quantitative Information Flow, Information Leakage, Channel Refinement.
\end{IEEEkeywords}

\input{A_text}

\bibliographystyle{ieeetr}
\bibliography{references}

\appendix
\input{A_appendix}

\end{document}

%% file: macros/comments.tex
\ifediting
    \usepackage{comment}
    \usepackage[draft,textsize=footnotesize,textwidth=17mm]{todonotes}
\fi

\newcommand{\openproblem}{\textbf{}:}

\usepackage[strict]{changepage}
\usepackage{framed}
\usepackage{color}

%% file: macros.tex
\newcommand{\pr}{{\sf Pr}} 

\newcommand{\calu}{\mathcal{U}}

\newcommand{\calx}{\mathcal{X}}
\newcommand{\caly}{\mathcal{Y}}

\newcommand{\bbe}{\mathbb{E}}

\newcommand{\Ch}{{\textnormal{\textsf{C}}}}

\definecolor{c1}{HTML}{F2CF96} 
\definecolor{c2}{HTML}{F2CF96} 
\definecolor{c7}{HTML}{E1B0AB}
\definecolor{c9}{HTML}{B3A5B2}
\definecolor{c3}{HTML}{B299C1} 
\definecolor{c5}{HTML}{B299C1} 
\definecolor{c10}{HTML}{A498BE}
\definecolor{c8}{HTML}{8B8BAF}
\definecolor{c4}{HTML}{7E8CB1}

\definecolor{Nc1}{HTML}{E8FFE8} 
\definecolor{Nc5}{HTML}{D8F8D8} 
\definecolor{Nc8}{HTML}{C0EBC0} 
\definecolor{Nc4}{HTML}{A5D8A5} 

\definecolor{NSc1}{HTML}{FFEFEF} 
\definecolor{NSc9}{HTML}{FAE0E0} 
\definecolor{NSc5}{HTML}{F5D5D5} 
\definecolor{NSc10}{HTML}{EFC5C5} 
\definecolor{NSc8}{HTML}{EAB5B5} 
\definecolor{NSc4}{HTML}{D8A0A0} 

\definecolor{NSrc6}{HTML}{E8FFFF} 
\definecolor{NSrc1}{HTML}{CFF9F9} 
\definecolor{NSrc2}{HTML}{CFF9F9} 
\definecolor{NSrc7}{HTML}{BBEFEF} 
\definecolor{NSrc3}{HTML}{AAE3E3} 
\definecolor{NSrc5}{HTML}{AAE3E3} 
\definecolor{NSrc8}{HTML}{99DCDC} 
\definecolor{NSrc4}{HTML}{88D0D0} 

\newcounter{MJ}

\newcounter{CP}

\newcounter{MA}

\newcounter{RG}

%% file: A_text.tex


\section{Introduction}

As the demand for privacy-preserving data collection continues to grow, Local Differential Privacy (LDP)~\cite{first_ldp} has emerged as a robust technique for safeguarding individual privacy. 
Unlike traditional differential privacy~\cite{Dwork2006}, which requires a trusted aggregator, LDP enables users to independently perturb their data before transmission. 
This ensures that sensitive information remains protected, even if the data collector is compromised. 
As a result, LDP has been widely adopted in \emph{large-scale frequency monitoring systems} such as the Google Chrome browser~\cite{rappor}, Apple iOS/macOS~\cite{apple}, Google Gboard~\cite{gboard}, and Windows 10 operating system~\cite{microsoft}.

Frequency estimation protocols~\cite{tianhao2017} constitute a core building block of many LDP deployments.
As a result, their comparative evaluation plays a central role in the design and selection of LDP mechanisms\footnote{“LDP mechanisms” denotes the obfuscation methods, whereas “frequency estimation protocols” refers to these LDP mechanisms together with any postprocessing technique to recover frequencies or histograms from the obfuscated data. Since postprocessing does not weaken privacy guarantees, we often use these two terms interchangeably throughout this paper.}.
However, current evaluation approaches fall short of comparing the strength of the various LDP mechanisms against different kinds of attackers. Indeed, they consider the LDP parameter $\varepsilon$ as the \textit{only} measure of
privacy, and rank protocols on the basis of their utility--in this case, the precision of the frequency estimate, evaluated in terms of error-based metrics--for the same privacy budget $\varepsilon$.
While such classifications are valuable from a statistical perspective, they provide only a partial and utility-dependent view of privacy protection, consequently leading to misleading comparisons of the privacy-utility trade-offs:
As a matter of fact, $\varepsilon$ only bounds the worst-case distinguishability between data points, and two protocols satisfying the same $\varepsilon$-LDP guarantee may exhibit significantly different leakage behaviors under realistic inference attacks, such as data reconstruction~\cite{Gursoy2022} or re-identification~\cite{Arcolezi2023}.
Therefore, from a privacy standpoint, this raises a fundamental question: \emph{What does it mean for one LDP protocol to be strictly more private than another?}

To address this gap, we advocate for a \emph{Quantitative Information Flow} (QIF)~\cite{alvim2020science} perspective on LDP.
QIF provides a principled framework for measuring information leakage of LDP protocols by representing randomized systems as information-theoretic channels and quantifying the attacker’s expected gain for a wide range of threat models.
Crucially, QIF supports \textit{refinement relations}, where one channel is deemed strictly more private than another if it minimizes the potential gain for \emph{any attacker}, regardless of their prior knowledge or specific objectives. This refinement-based approach offers a more robust and comprehensive framework for evaluating protocol superiority than traditional comparisons limited to $\varepsilon$ and utility metrics. As a result, we are able to uncover settings in which protocols commonly regarded as ``optimal'' under error-based metrics are in fact incomparable to, or even strictly dominated by, other protocols.

In this paper, \textbf{we present a formal QIF-based analysis of widely used LDP frequency estimation protocols}, including Generalized Randomized Response (GRR)~\cite{kairouz2016discrete}, Subset Selection (SS)~\cite{Min2018,wang2016mutual}, Binary Local Hashing (BLH)~\cite{Bassily2015}, Optimized Local Hashing (OLH)~\cite{tianhao2017}, Symmetric Unary Encoding (SUE)~\cite{rappor}, Optimized Unary Encoding (OUE)~\cite{tianhao2017}, and Thresholding with Histogram Encoding (THE)~\cite{tianhao2017}.
These mechanisms form the core of many real-world LDP deployments.
For instance, SUE underlies Google's RAPPOR~\cite{rappor} and Microsoft's $d$BitFlipPM~\cite{microsoft} telemetry-based systems, while SS has been recently integrated into Google's Gboard~\cite{gboard}.
Similarly, Apple's Count Mean Sketch mechanism~\cite{apple} relies on data structures closely related to local hashing techniques, such as BLH and OLH.
Finally, THE builds on the widely used Laplace mechanism~\cite{Dwork2006}, a building block of practical differential privacy systems (e.g., LinkedIn API~\cite{linkedin}).

For each of the above protocols, we derive the exact \emph{channel matrix} induced by the local randomization procedure and analyze its leakage properties using standard QIF vulnerability measures.
To support this analysis, we adopt \emph{$f$-differential privacy} ($f$-DP)~\cite{Dong2022}, which characterizes privacy guarantees in terms of 
hypothesis testing, and supports comparison of privacy mechanisms using trade-off functions derived from the most powerful hypothesis tests via the celebrated Neyman-Pearson lemma. This provides a unifying abstraction that aligns naturally with QIF-based refinement~\cite{fernandes2025composition}.
Within this framework, $\varepsilon$-LDP appears as a coarse worst-case bound, while QIF-based refinement relations expose finer-grained distinctions between protocols that $\varepsilon$ alone cannot capture.

\subsection{Contribution}
The main contributions of our paper are as follows:  

\begin{itemize}
    \item We provide a unified QIF-based framework for analyzing and comparing LDP frequency estimation protocols through channel matrices and refinement relations, enabling metric-independent privacy comparisons across adversaries and utility functions. 
    \item We formally bridge the LDP and QIF literatures by showing that data reconstruction attacks studied in prior work~\cite{Gursoy2022,Arcolezi2023} coincide with standard QIF notions of vulnerability.
    In particular, we demonstrate that the expected success of such attacks corresponds to the Bayes vulnerability induced by the LDP channel.
    \item Leveraging this channel-based formulation, we revisit the analysis of Local Hashing (LH) protocols, including BLH and OLH.
    We identify and correct an error in the expected data reconstruction attack analysis reported by~\cite{Gursoy2022}, and derive the correct Bayes risk for LH-based protocols via averaged vulnerability.
    \item We study refinement relations between LDP protocols, uncovering new refinements, which are experimentally corroborated on real-world datasets.

\end{itemize}

\section{Related Work} \label{sec:related_work}

\textbf{Utility-driven LDP protocol design.} A variety of LDP-based frequency estimation protocols have been introduced, all aiming to enhance the privacy-utility trade-off~\cite{tianhao2017,rappor,wang2016mutual,Min2018,Pan2025}. In these works, as well as in most of the differential privacy literature, privacy is characterized by the parameter $\varepsilon$, while utility is evaluated through estimation error, commonly using mean squared error or $l_1$ and $l_2$ distances~\cite{kairouz2016discrete}.
Within this framework, notions of optimality are inherently metric-dependent.
While these analyses provide valuable insights into estimation accuracy, they do not offer a comparison of privacy leakage across protocols, nor do they induce an ordering that is independent of the chosen utility metric.

\textbf{Inference attacks on LDP protocols.} Beyond utility analysis, more recently, a few works~\cite{Gursoy2022,Arcolezi2023,arcolezi2024revealing,Balioglu2026} have analyzed LDP mechanisms through an adversarial lens.
For instance, data reconstruction attacks~\cite{Gursoy2022,arcolezi2024revealing} aim to infer individual user values from locally perturbed reports, while re-identification attacks~\cite{Arcolezi2023} attempt to link anonymised or perturbed records back to individuals.
These attacks reveal that two mechanisms satisfying the same $\varepsilon$-LDP guarantee can exhibit significantly different levels of privacy violations.
However, existing attack analyses typically focus on specific adversarial objectives and threat models, such as maximum-likelihood reconstruction or linkage success rates.
As a result, they provide valuable empirical or analytical evidence of leakage, but remain tied to particular attack strategies.
Consequently, these approaches do not yield a general, attacker-independent ordering of LDP protocols, nor do they explain how different attacks relate to one another from a privacy perspective.


\textbf{Relation between DP/LDP and QIF.} 
There have been several works exploring the connection between differential privacy (DP) and QIF by modeling randomized mechanisms as information-theoretic channels and interpreting privacy as a bound on information leakage. Early results showed that $\varepsilon$-DP implies general, prior-independent upper bounds on min-entropy and Rényi-based leakage measures, thus relating DP guarantees to worst-case QIF notions of leakage \cite{Alvim2011,BartheKopf2011}. This perspective was further developed by explicitly studying DP mechanisms through the dual lenses of information leakage and utility within a unified channel framework \cite{Alvim2015}. More recently, refinement-based approaches from QIF have been applied to DP, providing order-theoretic characterizations of worst-case leakage and enabling systematic comparisons between mechanisms \cite{Chatziko2019}. LDP mechanisms have similarly been analyzed as constrained channels, and QIF measures such as maximal leakage have been used to derive capacity-style interpretations and worst-case leakage bounds \cite{IssaWK20,LopuhaaZwakenbergSkoricLi2019}.
One specific example of an LDP mechanism analyzed using constrained channels is Randomized Response (or GRR), for which the authors in \cite{jurado2023analyzing} measured the privacy amplification achieved via shuffling.
None of the above works, however, uses a QIF approach to compare LDP mechanisms.

\section{Background and Preliminaries}

Throughout the paper, let $[n]=\{1, 2, \ldots, n\}$ denote a set of integers, $\calX$ be the input space with finite domain of size $k = |\calX|$, and $\calY$ the set of (perturbed) outputs of a protocol.

\subsection{Local Differential Privacy}
\label{sub:ldp}

\emph{Local Differential Privacy} (LDP) \cite{first_ldp} formalizes the idea that a single user can protect her data \emph{before} it ever leaves her device.  
Let $\mathcal{X}$ be a finite input domain of size $k \!=\! |\mathcal{X}|$ and let $\mathcal{Y}$ be the (finite) output domain of a randomized mechanism $\mathcal{M}\!:\!\mathcal{X}\!\to\!\mathcal{Y}$.  
We denote the random variables of interest by $X\in\mathcal{X}$ and $Y\!=\!\mathcal{M}(X)\in\mathcal{Y}$.

\begin{definition}[{$\varepsilon$-Local Differential Privacy}]
\label{def:ldp}
$\mathcal{M}$ is said to satisfy $\varepsilon$-local differential privacy ($\varepsilon$-LDP) if, for \emph{every} pair of inputs $x,x'\!\in\!\mathcal{X}$ and for \emph{every} output $y\!\in\!\mathcal{Y}$,
\begin{equation}
  \Pr\!\bigl[\mathcal{M}(x)=y\bigr]
  \;\le\; e^{\varepsilon}\;
          \Pr\!\bigl[\mathcal{M}(x')=y\bigr].
  \label{eq:ldp}
\end{equation}
\end{definition}

Inequality~\eqref{eq:ldp} upper-bounds how much more likely \emph{any} particular output can be under one input than under another.  
The smaller $\varepsilon$ is, the closer the two probabilities are forced to be, and hence the stronger the privacy guarantee.  

\subsection{Quantitative Information Flow (QIF)}
\label{sub:qif}

QIF is a framework for modeling Bayesian attacks against systems represented as information-theoretic channels. These attacks give rise to metrics that quantify how much the channel reveals about a secret.
In our setting, the channel is the LDP mechanism $\mathcal{M}$, whose behavior is fully captured by the row-stochastic \emph{channel matrix}
\[
  C_{x,y}\;=\;\Pr\nolimits_{Y|X}[y\mid x],
  \qquad
  x\in\mathcal{X},\;y\in\mathcal{Y}.
\]
Given a prior distribution $\pi$ over~$\mathcal{X}$, the induced
joint distribution of the \emph{input} $X$ and the \emph{output}
$Y$ is
\begin{equation}
  \Pr\nolimits_{X,Y}[x,y]\;=\;\pi_x\,C_{x,y}.
  \label{eq:joint}
\end{equation}
The marginal of~$Y$ is obtained by summing \eqref{eq:joint} over
all inputs:
\begin{equation}
    \Pr\nolimits_{Y}[y]\;=\;\sum_{x\in\mathcal{X}}\pi_x\,C_{x,y}.
  \label{eq:marginal-y}
\end{equation}
When the prior is uniform, $\pi_x=1/k$, so
$\Pr_{X,Y}[x,y]=C_{x,y}/k$ and \eqref{eq:marginal-y} simplifies to
$\Pr_{Y}[y]=\tfrac1k\sum_x C_{x,y}$.

Our adversarial model is that of a Bayesian adversary with some prior knowledge $\pi:\Dist\calX$ represented as a distribution over secrets $\calX$, a gain function $g: {\cal{W}{\times}\calX} \rightarrow \mathbb{R}$ describing their gain on taking action $w \in \cal{W}$ when the secret has value $x \in \calX$, and who has knowledge of the mechanism $C$. We consider 2 classes of adversaries: the \emph{average-case} adversary, whose success is measured as their \emph{expected} gain after making an observation from the system; and the \emph{max-case} adversary, whose success is measured as the \emph{maximum} gain over all possible observations. Leakage is measured as the (multiplicative) difference between the adversary's prior and posterior knowledge.

Formally, the adversary's prior expected gain is given by
\begin{equation}
  \gVuln(\pi) ~=~ \max_w \sum_x \pi_x g(w, x).
\end{equation}
This is also called the vulnerability of the secret.
The adversary's posterior expected gain is
\begin{equation}
    \gVuln(\pi, C) ~=~ \sum_y \max_w \sum_x \pi_x C_{x,y} g(w,x),
\end{equation}

\noindent with expected leakage
\begin{equation}
    \gLeak(\pi, C) ~=~ \frac{\gVuln(\pi, C)}{\gVuln(\pi)}.
\end{equation}

\noindent In the max-case, the prior gain is~\footnote{Note here we are using the g-leakage version of max-case, which is slightly different from the max-case vulnerability normally defined in QIF, but which we choose for its relevance to the study of LDP.}
\begin{equation}
    \gVuln^{\mathrm{MAX}}(\pi) ~=~ \max_{w,x} \pi_x g(w, x),
\end{equation}

\noindent and the posterior max-case gain is
\begin{equation}
    \gVuln^{\mathrm{MAX}}(\pi, C) ~=~ \max_{y, x, w} \pi_x C_{x, y} g(w, x),
\end{equation}

\noindent with max-case leakage
\begin{equation}
    \gLeak^{\mathrm{MAX}}(\pi, C) ~=~  \frac{\gVuln^{\mathrm{MAX}}(\pi, C)}{\gVuln^{\mathrm{MAX}}(\pi)}.
\end{equation}

We also make use of two robust notions from QIF: \\

1) \textbf{Capacity} which measures the maximum leakage of a system, quantified over all priors and all possible adversaries (within some class). In particular, we note the following.

\begin{itemize}
\item \textbf{Average-case capacity}, also known as \textbf{Bayes capacity} is given by~\cite{alvim2020science}
\begin{equation}\label{def:capacity}
  \BayesCap(C) = \max_{\pi, g} \gLeak(\pi, C) = \sum_y \max_x C_{x, y} \mathrm{.}
\end{equation}

The Bayes capacity is a tight upper bound on all average-case $g$-leakage measures, i.e., it strictly bounds the leakage for any gain function and prior of the adversary (``Miracle Theorem'' \cite{m2012measuring}). One benefit of the Bayes capacity is that it is easy to calculate. In fact, it equals the sum of the maximum values in each column of the channel \cite{BraunCP09}.

\item \textbf{Max-case capacity} is given by~\cite{Fernandes2024}
\begin{equation}
    \BayesCap^{\mathrm{MAX}}(C) = \max_{\pi, g} \gLeak^{\mathrm{MAX}}(\pi, C) = e^\varepsilon,
\end{equation}
where $\varepsilon$ is the minimum LDP guarantee for  channel $C$. 

\noindent The max-case capacity is a tight upper bound on all average-case and max-case $g$-leakage measures~\cite{Fernandes2024}.
\end{itemize}

2) \textbf{Refinement} which is a binary relation and partial order on channels; we say $A$ refines $B$, written $B \refby A$, if $A$ leaks no more than $B$ to any average-case adversary for any prior; equivalently, $B \refby A$ if there exists a channel $W$ such that $B{\cdot}W = A$ where ${\cdot}$ is matrix multiplication. We recall that average-case refinement is stronger than max-case:~\cite{Chatziko2019}
\begin{equation}
  A \refby B \implies A \maxrefby B
\end{equation}
where $\refby^{\mathrm{MAX}}$ is the max-case equivalent refinement.~\footnote{This holds for both the $g$-leakage definition and the traditional definition of max-case.}

We will also make use of a result from \cite{fernandes2025composition} showing that, in the case of binary channels, refinement is equivalent to the $f$-differential privacy ordering between channels.

\paragraph{f-Differential Privacy} Introduced in \cite{Dong2022}, $f$-differential privacy ($f$-DP) is a recent privacy definition allowing more nuanced privacy loss assessments and analyses than traditional differential privacy. $f$-DP models secrets (adjacent datasets) as hypotheses $H_0$, $H_1$, and defines indistinguishability of secrets using the Type-I ($\alpha$) and Type-II ($\beta$) error rates for the most powerful test at various significance levels $\alpha$ (as defined in the Neyman-Pearson lemma). This gives rise to a trade-off function $T_M(\alpha) = \beta$ describing this relationship for a particular mechanism $M$. These trade-off functions can be compared: if $T_M \leq T_N$ pointwise, for mechanisms $M, N$ then $N$ is more private than $M$. It was shown in \cite{fernandes2025composition} that this is exactly equivalent to the QIF notion of refinement. That is,
\begin{equation}
T_M \leq T_N \iff M \refby N ~.
\end{equation}  

In the case of 2x2 channels, the construction of the trade-off function is trivial: by swapping columns if necessary~\footnote{We note that swapping columns does not change the information leakage semantics of the channel.}, the channel $C$ should be put into the form:
\begin{equation}\label{eqn:alpha_beta}
 \begin{bmatrix}
     1 - \alpha & \alpha \\
     \beta & 1 - \beta
 \end{bmatrix}
\end{equation}
where $\beta \leq 1 - \alpha$. Then the trade-off function $T_C$ is the piece-wise linear function joining $(0,1), (\alpha, \beta)$ and $(1, 0)$.

In this paper, we will use trade-off functions to compare channels on 2 secrets under refinement.

\section{Leakage Measures}

In this section, we introduce some leakage measures used in the LDP literature and identify their relationship to QIF measures. This will guide our choice of leakage measures to study in this paper.

\subsection{Attack Model and Leakage for Data Reconstruction}
\label{sub:reconstruction}

As well as reasoning about general safety properties using QIF, we wish to also reason about a particular threat of relevance to our setting: the data reconstruction attack, a one-try attack by an adversary wishing to guess the secret exactly. 

We model the LDP mechanism as a noisy channel with matrix entries
$C_{x,y}\!=\!\Pr_{Y|X}[y\mid x]$. Our adversary makes a single observation $Y = y$ from the mechanism, and outputs a guess $\widehat{x}\in\mathcal{X}$ with the goal of recovering the true input $X = x$.

We equip our adversary with a \emph{uniform} prior $\pi_x=1/ |\calX| = 1/k$.~\footnote{The uniform prior maximizes the attacker’s uncertainty; any non-uniform prior only makes reconstruction easier. This result comes from Bayes capacity, presented in Eq.~\eqref{eq:bayes_capacity}, proposed in~\cite{m2012measuring}.} Bayes’ rule shows that
\[
  \Pr\nolimits_{X|Y}[x\mid y]\;=\;
  \frac{\pi_x\,C_{x,y}}{\Pr_{Y}[y]}
  \;\propto\; C_{x,y},
\]
so the Bayes-optimal (risk-minimizing) decision rule is the \emph{posterior maximum}. That is, 
\begin{equation}
  \widehat{x}(y)\;=\;
  \arg\max_{x\in\mathcal{X}} \Pr\nolimits_{X|Y}[x\mid y]
  \;=\;
  \arg\max_{x\in\mathcal{X}} C_{x,y}.
  \label{eq:posterior-max}
\end{equation}


\noindent\textbf{Remark}: We only consider finite $\calX$ in this paper, so the above (uniform prior, argmax) is well-defined. \\

We can now define the \emph{Adversarial Success Rate}~\cite{Gursoy2022} (or, data reconstruction attack~\cite{arcolezi2024revealing}), which measures the ratio of clients whose true value is correctly predicted by the adversary.


\begin{definition}[Adversarial Success Rate (ASR)]
Let $\widehat{X} = \widehat{x}(Y)$ be the random variable defined by the rule~\eqref{eq:posterior-max}. The Adversarial Success Rate is computed as
\begin{equation}
\begin{split}
  \mathrm{ASR}(C)
  \;&=\;
  \mathbb{E}\left[\Pr\nolimits_{X|Y}[\widehat{X}=X \mid Y]\right] \\
  &= \sum_{y\in\mathcal{Y}}
     \Pr\nolimits_{Y}[y]\,
     \max_{x\in\mathcal{X}}\Pr\nolimits_{X|Y}[x\mid y].
  \label{eq:asr}
\end{split}
\end{equation}
where $C$ is the channel in \eqref{eq:posterior-max}.
\end{definition}


\noindent\textbf{Connection to QIF.} The $0/1$-gain function (also known as Bayes gain function), defined $g(w, x) = 1$ if $w = x$ else $0$ for $w\in\mathcal{W} = \calX$~\footnote{Throughout the paper, we may refer to the vulnerabilities induced by the Bayes gain function as ``1'' instead of explicitly using $g$.}, represents the gain of an adversary whose goal is to guess the secret in 1 try. The posterior vulnerability of a channel $\cm$ under this gain function is given by
\begin{align*}
    V_1(\pi, \cm) ~&=~ \sum_{y \in \calY} \max_w \sum_{x \in \calX} \pi_x \cm_{x,y} g(w,x) \\
    &=~ \sum_{y \in \calY} \max_{x \in \calX} \pi_x \cm_{x,y} \\
    &=~ \sum_{y \in \calY} \max_{x \in \calX} \Pr[y] \frac{\pi_x \cm_{x,y} }{\Pr[y]} \\
    &=~ \sum_{y \in \calY} \Pr[y] \max_{x \in \calX} \Pr[x \mid y]
\end{align*}
which is exactly the ASR given in Eqn~\eqref{eq:asr}.  We therefore have the identity
\begin{equation}
  \mathrm{ASR}(C)\;=\;V_1(\pi, C)~.
\end{equation}
This tells us that the adversarial success rate measures the maximum posterior vulnerability of the system with regard to an attacker who is trying to guess the secret in 1 try from a single observation. Since our attacker is equipped with a \emph{uniform} prior, we then have a direct connection with the average-case capacity. That is,

\begin{equation}\label{eq:bayes_capacity}
    \mathrm{ASR}(C) ~=~ \BayesCap(C) / k~.
\end{equation}

\subsection{Information-Theoretic Leakage Measure}

A number of commonly used leakage measures are derived from information-theoretic quantities and are sometimes employed to assess leakage in DP mechanisms. One such measure is \emph{min-entropy} leakage, which extends Rényi min-entropy to a measure of information leakage and is defined as follows.

\begin{definition}[Min-entropy leakage]
Let $X, Y$ be random variables taking values in finite sets $\calX, \calY$ respectively. Then the min-entropy leakage of X given Y is defined
\begin{equation}
    \mathcal{L}_{\infty}(X|Y) = H_{\infty}(X) - H_{\infty}(X|Y)~,
\end{equation}
where $H_{\infty}(X)\,{=}\log \max_x \Pr_{X}[x]$ and $H_{\infty}(X|Y)\,{=}\,\log \left(\sum_y \Pr_{Y}[y] \max_x \Pr_{X|Y}[x|y]\right)$.
\end{definition}

\noindent \textbf{Connection to QIF.}
%
It was shown in \cite{m2012measuring} that for a channel $C:\calX \rightarrow \calY$ and uniform prior on $\calX$, we have:  
\begin{equation}
   {\mathcal{L}}_{\infty}(X|Y) = \log (\BayesCap(C))~.
\end{equation}


We can transfer the definitions of random variables $X$ and $Y$ to elements of QIF. Let $\calX$ be the set of secrets (space of $X$) and $\pi$ the prior distribution on $\calX$. Let $C$ be a channel that provides the conditional probability $C_{x,y}\,{=}\,\Pr_{Y|X}[y\mid x]$ for the set of outputs $\calY$ (space of $Y$). Setting a uniform prior $\pi_x=1/ |\calX| = 1/k$, and the Bayes gain function, defined as $g(w, x) = 1$ if $w = x$ else $0$ for $w\in\mathcal{W} = \calX$, we have that
\begin{align*}
   {\mathcal{L}}_{\infty}(X|Y) 
   &=  -\log \max_x \Pr\nolimits_{X}[x] \\
   &\hspace{12px} + \log \left(\sum_y \Pr\nolimits_{Y}[y] \max_x \Pr\nolimits_{X|Y}[x|y]\right) \\
   &= -\log V_g(\pi) + \log V_g(\pi, C) \\
   & = -\log (1/k) + \log (\BayesCap(C) / k) \\
   & = \log (\BayesCap(C)).
\end{align*}





\noindent\textbf{Summary of Leakage Measures. } Due to the strong connection between several of the measures identified in this section, we can limit our attention to 2 leakages: 
\begin{enumerate}
    \item the \emph{Bayes capacity} ($\BayesCap$), which upper bounds all average-case leakages; and
    \item the \emph{Max-case capacity} ($\BayesCap^{\mathrm{MAX}}$), corresponding to the $\varepsilon$ of LDP, which upper bounds all average-case and max-case leakages.
\end{enumerate}

We note the relationship, for all channels $C$, that
\[
   \BayesCap^{\mathrm{MAX}}(C) \geq \BayesCap(C).
\]

However, as has previously been shown~\cite{Chatziko2019}, this ordering does not imply an ordering between channels. That is, it can be that $\BayesCap^{\mathrm{MAX}}(A) > \BayesCap^{\mathrm{MAX}}(B)$ but $\BayesCap(A) < \BayesCap(B)$.

\section{LDP Protocols and their Leakages} \label{sec:ldp_protocols}

In this section, we present the $\varepsilon$-LDP protocols under study, focusing on their local perturbation mechanisms.
Proofs can be found in the appendix of the full version of this paper~\cite{full_version}.
While these protocols are typically paired with \emph{unbiased frequency estimators}, our analysis deliberately abstracts away from the estimation step (known in the literature~\cite{tianhao2017}) and concentrates on the information leakage induced by the local randomization.
We derive the corresponding information-channel representations and provide an initial analysis of their leakage properties.
In later sections, these channel representations are used to analyze refinement relations between the mechanisms.



\subsection{Generalized Randomized Response (GRR)} \label{sub:grr}

Generalized Randomized Response (GRR)~\cite{kairouz2016discrete} is a mechanism where each input is reported truthfully with probability $p$ and as any other possible value with probability $1-p$.

\begin{definition}[GRR]\label{def:grr}
Let $\calX = \{ x_1, x_2, \ldots, x_k \}$ and $\calY = \{ y_1, y_2, \ldots, y_k \}$ be finite domains, with $\calY = \calX$ and $|\calX| = k, k \geq 2$.
Then the GRR mechanism for $\varepsilon \geq 0$ is constructed as
\begin{equation} \label{eq:grr}
    \Pr[\mathrm{GRR}(x)=y] = \begin{cases} p=\frac{e^{\varepsilon}}{e^{\varepsilon}+k-1} \textrm{ if } y = x,\\ q=\frac{1}{e^{\varepsilon}+k-1} \textrm{ if } y \neq x \textrm{,} \end{cases}
\end{equation}
\noindent where $y \in \calY$ is the perturbed value sent to the server. 
\end{definition}

\noindent\textbf{Channel Matrix:} The channel matrix ${\grrm}^{\varepsilon}$ for GRR is defined
 \[
    {\grrm}^{\varepsilon}_{ij} = \begin{cases}
    p & \text{if } i = j  \mathrm{,}\\
    q & \text{if } i \ne j \mathrm{.}
    \end{cases}
    \]
    
\noindent \textbf{Example.} For instance, the channel matrix for GRR with $k = 3$ and $\varepsilon = \ln 2$:
    \[
    {\grrm}^{\ln 2} =
    \begin{bmatrix}
    p & q & q \\
    q & p & q \\
    q & q & p \\
    \end{bmatrix} = 
    \begin{bmatrix}
    \nicefrac{1}{2} & \nicefrac{1}{4} & \nicefrac{1}{4} \\
    \nicefrac{1}{4} & \nicefrac{1}{2} & \nicefrac{1}{4} \\
    \nicefrac{1}{4} & \nicefrac{1}{4} & \nicefrac{1}{2} \\
    \end{bmatrix}
     \mathrm{.}
    \]

\noindent\textbf{Leakage Analysis using Bayes capacity:}

Since $p > q$ it is easy to see that every column maximum is $p$ and so we have
\begin{align}
    \BayesCap(\grrm^{\varepsilon}) ~&=~ pk 
    ~=~ \frac{k e^{\varepsilon}}{e^{\varepsilon} + k - 1}
\end{align}

\subsection{Subset Selection (SS)}
Subset Selection (SS)~\cite{Min2018,wang2016mutual} is a protocol that selects a subset of values from the input space and reports it. 
The subset size $\omega$ is determined by $\varepsilon$ and $k$; it is the value that minimizes the variance.


\begin{definition}[SS]
 Let $\calX = \{ x_1, x_2, \ldots, x_k \}$ be a finite domain, let $\varepsilon \geq 0$ and let $\omega = \max \left(1, \left\lfloor \frac{k}{e^{\varepsilon} + 1} \right\rceil \right)$.~\footnote{The notation $\left\lfloor x \right\rceil$ denotes the closest integer value to $x$.} Then, given $x \in \calX$, an output $\mathbf{y}$ of the SS mechanism is constructed as follows:
\begin{enumerate}
    \item Add the true value $x$ to $\mathbf{y}$ with probability $p=\frac{\omega e^{\varepsilon}}{\omega e^{\varepsilon} + k - \omega}$.
    \item If $x$ was added, then $\omega - 1$ values are sampled from $\calX \setminus \{x\}$ uniformly at random (without replacement) and are added to $\mathbf{y}$.
    \item If $x$ was not added, then $\omega$ values are sampled from $\calX \setminus \{x\}$ uniformly at random (without replacement) and are added to $\mathbf{y}$.
\end{enumerate}
The subset $\mathbf{y}$ is then sent to the server.
\end{definition}

\noindent\textbf{Channel Matrix: } The channel matrix $\ssm^{\varepsilon}$ for SS reflects the probability of each input value being included in the reported subset. For input $x_i$ and subset $\mathbf{y}_j$:
    \[
    \ssm^\varepsilon_{i,j} = \begin{cases}
    p \cdot \frac{1}{\binom{k-1}{\omega-1}} & \text{if } x_i \in \mathbf{y}_j \mathrm{,}\\
    (1 - p) \cdot \frac{1}{\binom{k-1}{\omega}} & \text{if } x_i \notin \mathbf{y}_j \mathrm{.}
    \end{cases}
    \]

\noindent \textbf{Example.} For instance, the channel matrix for SS with $k = 3$ and $\omega = 2$ is:


    \[
    \calX = \{0, 1, 2\}, \quad \mathbf{y} \in \binom{\calX}{2} = \{\{0, 1\}, \{0, 2\}, \{1, 2\}\} \mathrm{.}
    \]
    
    \[
    \ssm^{\varepsilon} =
    \begin{bmatrix}
    p \cdot \frac{1}{\binom{2}{1}} & p \cdot \frac{1}{\binom{2}{1}} & (1 - p) \cdot \frac{1}{\binom{2}{2}} \\
    p \cdot \frac{1}{\binom{2}{1}} & (1 - p) \cdot \frac{1}{\binom{2}{2}} & p \cdot \frac{1}{\binom{2}{1}} \\
    (1 - p) \cdot \frac{1}{\binom{2}{2}} & p \cdot \frac{1}{\binom{2}{1}} & p \cdot \frac{1}{\binom{2}{1}} \\
    \end{bmatrix} \mathrm{.}
    \]
    
    Where:
    \[
    \binom{2}{1} = 2, \quad \binom{2}{2} = 1, \quad p = \frac{2 e^{\varepsilon}}{2 e^{\varepsilon} + 3 - 2} \mathrm{.}
    \]

\noindent\textbf{Leakage Analysis using Bayes capacity:}


The Bayes capacity for the SS protocol depends on whether $p{\cdot} \frac{1}{ {k-1 \choose w-1}} > (1-p){\cdot}\frac{1}{{k-1 \choose w}}$. If this is true, then $pk > w$ and we have
    \[
    \BayesCap(C) ~=~ kp{\cdot}{\frac{1}{ {k-1 \choose w-1}}}
    \]
    Otherwise we have
    \[
    \BayesCap(C) ~=~ k(1-p){\cdot} \frac{1}{{k-1 \choose w}}
    \]

\subsection{Local Hashing (LH)}  \label{sub:lh_protocols}

LH protocols~\cite{tianhao2017,Bassily2015} use hash functions to map the input data $x \in \calX$ to a new domain $[g]$, and then obfuscate the hash value with GRR. There are two variations of LH mechanisms: (i) Binary LH (BLH)~\cite{Bassily2015} that just sets $g=2$, and (ii) Optimized LH (OLH)~\cite{tianhao2017} that selects $g=\lfloor \eeps + 1 \rceil$ to minimize the variance.



\begin{definition}[LH]\label{def:lh}
    Let $\mathcal{X} = \{x_1,x_2,\ldots,x_k\}$ be the set of input values and let $\mathscr{H}$ be a universal hash function family such that each $h\in \mathscr{H}$ hashes a value $x\,{\in}\,\mathcal{X}$ into $[g]$, i.e., $h\,{:}\,[k] \rightarrow [g]$, for $k,g \geq 2$. Given an input $x$, an output $y = (y_h,y_p)$ of LH is constructed as follows:
    \begin{enumerate}
        \item \textbf{Encoding step:} A hash function $y_h\,{\in}\,\mathscr{H}$ is chosen uniformly at random, and applied to $x$ to produce the encoded value $y_e\,{=}\,y_h(x)$. 

        \item \textbf{Perturbation step:} Given the encoded value $y_e$, 
        we now perturb it using GRR mechanism, thus producing $GRR(y_e) = y_p$. Finally the output consists of the pair $(y_h,y_p)$ of the hash function 
        $y_h$ and the perturbed encoded value $y_p$.

        \begin{equation}
            \forall i \in [g], \; \Pr[y_p = i] = 
            \begin{cases}
                p = \frac{e^{\varepsilon}}{e^{\varepsilon} + g - 1}, & \text{if } y_e = i \mathrm{,} \\
                q = \frac{1}{e^{\varepsilon} + g - 1}, & \text{if } y_e \neq i \mathrm{.}
            \end{cases}
        \end{equation}
        
    \end{enumerate}
\end{definition}

\noindent\textbf{Channel Matrix:} The channel matrix $\lhm^{\varepsilon}$ for Local Hashing reflects the probability of a pair (the hash function and the perturbed encoded value) being released. The set of inputs is $\mathcal{X} = \{x_1,\ldots,x_k\} = [k]$ and the channel's set of output pairs is $\mathcal{Y}\,{=}\,\{(y_h,y_p)~|~y_h\in\mathscr{H} \text{ and } y_p\in[g]\}$. For input $x$ and pair $y$:

\begin{equation}\label{channel_lh}
    \lhm^\varepsilon_{x,y} = \begin{cases}
    \frac{1}{g^k}\cdot\frac{e^\varepsilon}{e^\varepsilon+g-1} & \text{ , if $y_h(x) = y_p$}\\
    \frac{1}{g^k}\cdot\frac{1}{e^\varepsilon+g-1} & \text{ , if $y_h(x) \neq y_p$}.\\
\end{cases}
\end{equation}

\noindent\textbf{Leakage Analysis using Bayes capacity:}

We start by reasoning about ASR of LH, and following the equivalence of Eqn~\eqref{eq:bayes_capacity}, we obtain $\BayesCap(\lhm^{\varepsilon})$ by multiplying ASR by $k$.

For each reported pair $(y_h, y_p)$, the support set for LH protocols consists of all values $x \in \mathcal{X}$ that hash to $y_p$ and remain equal to $y_p$ after the application of GRR. This support set is denoted by $\setOne_{\mathrm{LH}}= \{x \mid y_h(x) = y_p\}$. Based on the support set derived from each user's report, the adversary can employ one of two possible attack strategies, denoted by $\calA_{\mathrm{LH}}$~\cite{Gursoy2022, Arcolezi2023}.


\begin{itemize}
    \item $\calA^0_{\mathrm{LH}}$ is a random choice $\hat{x}=\mathrm{Uniform}\left( [k] \right)$, if $\setOne_{\mathrm{LH}}=\emptyset$;
    
    \item $\calA^1_{\mathrm{LH}}$ is a random choice $\hat{x}=\mathrm{Uniform}\left( \setOne_{\mathrm{LH}} \right)$, otherwise.
\end{itemize}

The expected ASR of LH protocols was derived in~\cite{Gursoy2022} as
\begin{equation}  \label{eq:exp_asr_lh_wrong}
    \frac{e^{\varepsilon}}{(e^{\varepsilon} + g - 1) \cdot \max \left\{ \frac{k}{g}, 1 \right\}} \mathrm{.}
\end{equation}

\emph{However, this expression does not hold in general.}
In particular, for small domain sizes (e.g., $k=2$), Eq.~\eqref{eq:exp_asr_lh_wrong} did not match our experimental evaluations. 
Using the definition of the LH channel in Eqn~\eqref{channel_lh}, we obtained the following result that matches experimental evaluations for any value of $k$.

\begin{restatable}{proposition}{asrlh}[ASR of Local Hashing]\label{prop:asr_lh}
    Let $\lhm^\varepsilon$ be the channel defined in Eqn~\eqref{channel_lh}. Then
    \begin{equation} \label{eq:exp_asr_lh}
        \text{ASR}(\lhm^\varepsilon) = \frac{e^\varepsilon g^k + (g-1)^k (1-e^\varepsilon)}{(e^\varepsilon + g -1)(kg^{k-1})}.
    \end{equation}
\end{restatable}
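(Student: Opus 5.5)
We compute $\mathrm{ASR}(\lhm^\varepsilon)$ through the identity $\mathrm{ASR}(C)=\BayesCap(C)/k$ of Eqn~\eqref{eq:bayes_capacity}. By the Bayes capacity formula of Eqn~\eqref{def:capacity}, this reduces to summing the column maxima of the channel in Eqn~\eqref{channel_lh}:
\[
\mathrm{ASR}(\lhm^\varepsilon)=\frac{1}{k}\sum_{(y_h,y_p)}\max_{x} \lhm^\varepsilon_{x,(y_h,y_p)}.
\]
We take the hash family $\mathscr{H}$ to be the full set of functions $[k]\to[g]$, so that $|\mathscr{H}|=g^k$ and each hash is chosen with probability $1/g^k$, matching the normalisation in Eqn~\eqref{channel_lh}. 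Every column $y=(y_h,y_p)$ has only two possible entry values, $\frac{1}{g^k}p$ and $\frac{1}{g^k}q$, where $p=\frac{e^\varepsilon}{e^\varepsilon+g-1}$, $q=\frac{1}{e^\varepsilon+g-1}$, and $p\ge q$.

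The column maximum is $\frac{p}{g^k}$ exactly when the support set $\setOne_{\mathrm{LH}}=\{x : y_h(x)=y_p\}$ is nonempty. Otherwise it is $\frac{q}{g^k}$. The key step is therefore counting the pairs $(y_h,y_p)$ with $y_p\notin y_h([k])$. Fix $y_p\in[g]$. The functions $h:[k]\to[g]$ avoiding the value $y_p$ number $(g-1)^k$. Summing over the $g$ choices of $y_p$ gives $g(g-1)^k$ columns with maximum $q/g^k$. The remaining $g\cdot g^k-g(g-1)^k$ columns have maximum $p/g^k$.

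Hence
\[
\BayesCap(\lhm^\varepsilon)=\frac{g\bigl(g^k-(g-1)^k\bigr)p+g(g-1)^k q}{g^k}
=\frac{e^\varepsilon g^k+(g-1)^k(1-e^\varepsilon)}{(e^\varepsilon+g-1)\,g^{k-1}}.
\]
Dividing by $k$ yields Eqn~\eqref{eq:exp_asr_lh}.

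The main obstacle is justifying the choice of hash family. The formula depends on $\mathscr{H}$ being all $g^k$ functions, as the $1/g^k$ factor implicitly assumes. We would state this explicitly and note that it is the source of the discrepancy with Eq.~\eqref{eq:exp_asr_lh_wrong}. That earlier formula effectively ignores the columns with an empty support set and the collision structure, which matters for small $k$.

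As a sanity check, for $k=1$ the expression reduces to $p\cdot g/g\cdot\ldots$ consistently. For $k=2,\ g=2$ we would compare the result directly with the explicit $2\times 8$ channel.
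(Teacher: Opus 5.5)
Your proof is correct and is essentially the paper's own argument. The paper likewise writes the uniform-prior Bayes vulnerability as $\frac{1}{k}\sum_{z_p}\sum_{z_h}\max_x \lhm^\varepsilon_{x,(z_h,z_p)}$ over all $g^k$ functions $z_h:[k]\to[g]$, and for each fixed $z_p$ separates the $g^k-(g-1)^k$ hashes that hit $z_p$ (column maximum $p/g^k$) from the $(g-1)^k$ that miss it (maximum $q/g^k$); its only extra step is deriving the entries of Eqn~\eqref{channel_lh} from the cascade $\lhm^\varepsilon = EG$ of an encoding channel and GRR. Two minor remarks: your $k=1$ sanity check is garbled (the formula gives exactly $1$ there, as it should), and the paper attributes the error in Eq.~\eqref{eq:exp_asr_lh_wrong} to the use of $(\mathbb{E}[|\calu|]-1)/\mathbb{E}[|\calu|]$ in place of $\mathbb{E}[1-1/|\calu|]$ in the derivation of \cite{Gursoy2022}, which is a different diagnosis from yours.
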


In Section~\ref{sec:results}, we present an empirical evaluation comparing Equations~\eqref{eq:exp_asr_lh_wrong} and~\eqref{eq:exp_asr_lh}. 
From Eqn~\eqref{eq:exp_asr_lh}, we derive the Bayes capacity: 

\begin{equation}
    \BayesCap(\lhm^{\varepsilon}) = \frac{e^\varepsilon g^k + (g-1)^k (1-e^\varepsilon)}{(e^\varepsilon + g -1)g^{k-1}}.
\end{equation}

\subsection{Unary Encoding Protocols}

Unary Encoding (UE) protocols encode the input value $i$ as a one-hot $k$-dimensional vector (in which the $i$-th bit is 1) and output a $k$-dimensional bit vector by obfuscating each bit of the input independently.

\begin{definition}[UE]\label{def:ue}
Let $\calX = \{ \mathbf{x}_1, \mathbf{x}_2, \ldots, \mathbf{x}_k \}$, where each $\mathbf{x}_i$ is a one-hot encoded vector of length $k$ in which the bit at position $i$ is set to 1, and the remaining bits are 0. Given an input $\mathbf{x}$, an output vector $\mathbf{y}$ from a Unary Encoding (UE) protocol is constructed as follows: 
\begin{align*}
    \forall{i \in [k]} : \quad \Pr[\mathbf{y}[i]=1] =\begin{cases} p, &\textrm{ if } \mathbf{x}[i]=1 \textrm{,} \\ q, &\textrm{ if } \mathbf{x}[i]=0 \textrm{.}\end{cases}
\end{align*} 
The output $\mathbf{y}$ is sent to the server.
\end{definition}

\noindent\textbf{Channel Matrix:} Each entry \( \cm^{\varepsilon}_{ij} \) in the channel matrix for UE protocols represents the probability of obtaining the output \( \mathbf{y}_j \) given the input \( \mathbf{x}_i \). Formally, for each bit in the input vector \( \mathbf{x}_i \), we can compute:    
    \begin{align*}
     \Pr[\mathbf{y}_j[m] = 1 | \mathbf{x}_i[m] = 1] &= p \\
     \Pr[\mathbf{y}_j[m] = 0 | \mathbf{x}_i[m] = 1] &= 1 - p \\ 
     \Pr[\mathbf{y}_j[m] = 1 | \mathbf{x}_i[m] = 0] &= q \\
     \Pr[\mathbf{y}_j[m] = 0 | \mathbf{x}_i[m] = 0] &= 1 - q
    \end{align*}
    
    \noindent Thus, the probability of obtaining output \( \mathbf{y}_j \) given input \( \mathbf{x}_i \) is:
    \[
    \cm^{\varepsilon}_{ij} = \Pr[\mathbf{y}_j | \mathbf{x}_i] = \prod_{m=1}^k \Pr[\mathbf{y}_j[m] | \mathbf{x}_i[m]]
    \]
    
The following two variations of UE protocols have been studied.


\subsubsection{SUE}
The Symmetric UE protocol (SUE)~\cite{rappor} assigns $p= \frac{e^{\varepsilon/2}}{e^{\varepsilon/2}+1}$ and $q=\frac{1}{e^{\varepsilon/2}+1}$ in Definition~\ref{def:ue} so that $p + q = 1$.

\vspace{0.3cm}
For example, the SUE mechanism for $k=2$ has a channel matrix
 \[
    \setlength{\arraycolsep}{7pt}
    \begin{bNiceMatrix}[first-row, first-col]
    & 00 & 01 & 10 & 11 \\
    01 & pq & p^2 & q^2 & pq \\
    10 & pq & q^2 & p^2 & pq 
    \end{bNiceMatrix}
 \]

\subsubsection{OUE}
The Optimal UE protocol (OUE)~\cite{tianhao2017} was introduced to minimize the variance of the unbiased frequency estimator under $\varepsilon$-LDP among UE protocols.
To achieve this objective, OUE assigns $p=\frac{1}{2}$ and $q=\frac{1}{e^{\varepsilon}+1}$ in Definition~\ref{def:ue}.

For example, the OUE mechanism for $k = 2$ has a channel matrix
 \[
    \setlength{\arraycolsep}{7pt}
    \begin{bNiceMatrix}[first-row, first-col]
    & 00 & 01 & 10 & 11 \\
    01 & (1-p)(1-q) & p(1-q) & q(1-p) & pq \\
    10 &  (1-p)(1-q) & (1-p)q & p(1-q) & pq \\
    \end{bNiceMatrix}
 \]

\vspace{0.5cm}
\noindent\textbf{Leakage Analysis using Bayes capacity for SUE:}

%
%
\noindent Notice that the mechanism for SUE operating on bits has the form
\[
    \setlength{\arraycolsep}{7pt}
    \begin{bNiceMatrix}[first-row, first-col]
      & 0 & 1  \\
    0 & p & q \\
    1 & q  & p \\
    \end{bNiceMatrix}
 \]
Thus, given that $p > q$, the Bayes capacity for SUE can be computed using the closest input one-hot vector (in Manhattan distance) to the output vector. This yields

\begin{itemize}
    \item $\mathbf{y}$ all 0's: maximum posterior is $p^{k-1}q$
    \item $\mathbf{y}$ containing $i$ 1's and $k-i$ 0's: maximum posterior is $p^{k-i+1} q^{i-1}$ and there are $\binom{k}{i}$ such outputs
\end{itemize}

Summing these gives:
\begin{align*}
    \BayesCap(\suem^{\varepsilon}) ~&=~ (\sum_{i = 1}^k \binom{k}{i} p^{k-i+1} q^{i-1} ) + p^{k-1}q \\
    &=~ (\sum_{i=1}^k \binom{k}{i} p^{k-i}q^i{\times}\frac{p}{q}) + p^{k-1}q \\
    &=~ (\sum_{i=0}^k \binom{k}{i} p^{k-i}q^i{\times}\frac{p}{q}) + p^{k-1}q - \binom{k}{0} p^{k} {\times} \frac{p}{q} \\
    &=~ \frac{p}{q} (p+q)^k + p^{k-1}q - p^{k+1}q^{-1} \\
    &=~ \frac{p}{q} (p+q)^k + p^k (\frac{q}{p} - \frac{p}{q})
\end{align*}

Recalling that $p + q = 1$ this simplifies to 
\begin{align*}
   \BayesCap(\suem^{\varepsilon}) &= \frac{p}{1-p} + p^k((1-p)^2 - p^2) / (p(1-p)) \\
    &= (p + p^{k-1}(1 - 2p) )/ (1-p)
\end{align*}

\noindent\textbf{Bayes capacity for OUE:} The mechanism for OUE operating on bits has the form
\[
    \setlength{\arraycolsep}{7pt}
    \begin{bNiceMatrix}[first-row, first-col]
      & 0 & 1  \\
    0 & 1-q & q \\
    1 & 1/2  & 1/2 \\
    \end{bNiceMatrix}
\]

Given $1 - q \geq 1/2$, we can again compute the Bayes capacity using the closest input vector to the observed output. This yields:
\begin{itemize}
\item $\mathbf{y}$ containing all 0's: maximum posterior is $\frac{1}{2} (1-q)^{k-1}$
\item $\mathbf{y}$ containing $i$ 1's and $k-i$ 0's: maximum posterior is $\frac{1}{2} q^{i-1} (1-q)^{k-i}$ and there are $\binom{k}{i}$ such outputs.
\end{itemize}

Summing these yields:
\begin{align*}
    \BayesCap(\ouem^{\varepsilon}) ~&=~ \frac{1}{2} (1-q)^{k-1} + \frac{1}{2} \sum_{i = 1}^k \binom{k}{i} q^{i-1} (1-q)^{k-i} \\
    &=~ \frac{1}{2} (1-q)^{k-1} + \frac{1}{2q} \sum_{i = 0}^k \binom{k}{i} q^{i} (1-q)^{k-i} \\ 
    & \qquad \qquad - \frac{1}{2q} (1-q)^k \\
    &=~ \frac{1}{2q} \Bigl( q(1-q)^{k-1} + 1 - (1-q)^k \Bigr) \\
    &=~ \frac{1}{2q} \Bigl( (1-q)^{k-1} (2q - 1) + 1 \Bigr).
\end{align*}

\subsection{Thresholding with Histogram Encoding (THE)}

Like the UE protocols, Thresholding with Histogram Encoding (THE)~\cite{tianhao2017} encodes the input value $i$ as a one-hot $k$-dimensional vector, $\mathbf{x}=[0, 0, \ldots, 1, 0, \ldots, 0]$, in which only the $i$-th component is $1$. It outputs a $k$-dimensional bit vector by independently perturbing each input bit.

\begin{definition}[THE]
Let $\calX = \{ \mathbf{x}_1, \mathbf{x}_2, \ldots, \mathbf{x}_k \}$, where each $\mathbf{x}_i$ is a one-hot encoded vector of length $k$ in which the bit at position $i$ is set to 1, and the remaining bits are 0. Given an input $\mathbf{x}$ and a threshold $\theta \in (0.5,1)$, an output vector $\mathbf{y}$ from a THE protocol is constructed as follows:
\begin{enumerate}
    \item For each $i \in [k]$, $\mathbf{y}[i] := \mathbf{x}[i] + Z$, where $Z \sim \mathrm{Lap}(\frac{2}{\varepsilon})$ is a random variable drawn from a Laplace distribution with mean 0 and scale $\frac{2}{\varepsilon}$.
    \item For each $i \in [k]$, $\mathbf{y}[i] = \begin{cases}
1, & \text{if } \mathbf{y}[i] > \theta \\
0, & \text{if } \mathbf{y}[i] \leq \theta
\end{cases}$
\end{enumerate}
\end{definition}

The resulting output vector $\mathbf{y}$ is a binary vector in $\{0, 1\}^k$, which satisfies the following for all $i \in [k]$~\cite{tianhao2017}:
\begin{align*}
    \Pr[\mathbf{y}[i] = 1 \mid \mathbf{x}[i] = 1] &= 1 - \frac{1}{2} e^{\frac{\varepsilon (\theta-1)}{2}}  &= p\\
    \Pr[\mathbf{y}[i] = 1 \mid \mathbf{x}[i] = 0] &= \frac{1}{2} e^{-\frac{\varepsilon \theta}{2}}  &= q
\end{align*}

\noindent \textbf{Channel Matrix:} 
    Each entry \(\them^{\varepsilon}_{ij}\) of the THE channel matrix represents the probability of obtaining the output vector \(\mathbf{y}_j\) given the input vector \(\mathbf{x}_i\) and is given by:
    
    \[
    \them^{\varepsilon}_{ij} = \prod_{m=1}^{k} \Pr(\mathbf{y}_j[m] \mid \mathbf{x}_i[m]) \mathrm{.}
    \]

\noindent \textbf{Example} The channel matrix of THE for $k = 2$ is:

    \begin{equation*}
        \them^{\varepsilon} = 
        \begin{bmatrix}
            (1-p)(1-q) & (1-p)q & p(1-q) & pq \\
            (1-p)(1-q) & p(1-q) & (1-p)q & pq \\
        \end{bmatrix}
    \end{equation*}

\noindent\textbf{Leakage Analysis using Bayes capacity:}

Observe that the mechanism operating on bits has the form
\[
    \begin{bmatrix}
        1-q & q \\
        1-p & p
    \end{bmatrix}
\]

\noindent where $p > q$. As with the UE mechanisms, we can compute the Bayes capacity using the closest input one-hot vector to the observed (output) vector. This gives
\begin{itemize}
    \item $\mathbf{y}$ all 0's: maximum posterior is $(1-p)(1-q)^{k-1}$
    \item $\mathbf{y}$ containing $i$ 1's and $k-i$ 0's: maximum posterior is $p q^{i-1} (1-q)^{k-i}$ and there are $\binom{k}{i}$ such outputs.
\end{itemize}

Summing these gives:
\begin{align*}
    \BayesCap(\them^{\varepsilon}) ~&=~ (1-p)(1-q)^{k-1} + p \sum_{i = 1}^k \binom{k}{i} q^{i-1} (1-q)^{k-i} \\
    &=~ (1-p)(1-q)^{k-1} + \frac{p}{q} \sum_{i=1}^k \binom{k}{i} q^i (1-q)^{k-i} \\
    &=~ (1-p)(1-q)^{k-1} + \frac{p}{q} - \frac{p}{q}(1-q)^k \\
    &=~ (1-q)^{k-1}\left(1 - \frac{p}{q}\right) + \frac{p}{q}.
\end{align*}

\section{Formal Analysis of LDP Protocols}
All proofs for this section can be found in the appendix of the full version of this paper~\cite{full_version}.

\subsection{Refinement Families}\label{sec:families}


In this section, we study the relationship between LDP protocols represented as channels from the perspective of refinement. Recall that the refinement relation $\refby$ tells us whether one channel is safer than another against \emph{any} adversary (average or max-case), regardless of their prior knowledge. We begin by considering channels within the same ``family''.

\begin{definition}[$\varepsilon$-family]
Let $M_\varepsilon: \calX \rightarrow \calY$ be a mechanism parametrized by $\varepsilon$ so that for every $\varepsilon \geq 0$
we have that $M_\varepsilon$ is $\varepsilon$-DP. We call the set of mechanisms $M_\varepsilon$ an $\varepsilon$-family and denote it by $\mathbb{F}_M$.
\end{definition}

\begin{definition}[Refinement family]\label{def:ref_family}
Let $\mathbb{F}_M$ be an $\varepsilon$-family of mechanisms. We call $\mathbb{F}_M$ a \emph{refinement family} if
\[ 
   \varepsilon \geq \varepsilon' \iff M_\varepsilon \refby M_{\varepsilon'}
\]
\end{definition}

We note that for Def~\ref{def:ref_family} to hold, it is sufficient to show the forward direction only, since the reverse direction is already known to hold in general~\cite{Chatziko2019}.

Refinement families are important for managing the privacy-utility trade-off.

\begin{restatable}{theorem}{monotonicity}[Monotonicity of utility]
Let $\mathbb{F}_M$ be a refinement family of mechanisms. Let $\mathcal{U}(M)$ be a utility function that measures the expected loss of the mechanism $M$ to an analyst who observes a value $y$ and makes an optimal guess $\hat{y}$. Then the utility function $\mathcal{U}(M)$ is monotonic on $\varepsilon$. That is, 
\[
   \varepsilon \geq \varepsilon' \iff \mathcal{U}(M_\varepsilon) \geq \mathcal{U}(M_{\varepsilon'})~.
\]
\end{restatable}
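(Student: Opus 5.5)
The plan is to reduce the statement to the defining characterisation of refinement together with the data-processing behaviour of Bayes-type quantities under post-processing. I will read $\mathcal{U}(M)$ as the \emph{expected gain} of an optimal analyst (the statement says ``expected loss'', but for the inequality $\varepsilon \geq \varepsilon' \Rightarrow \mathcal{U}(M_\varepsilon)\ge\mathcal{U}(M_{\varepsilon'})$ to express that more privacy costs utility, $\mathcal{U}$ must be oriented as a gain, i.e.\ negated loss). The analyst observes $y$ and chooses an action $\hat y$ from some fixed action set, with a fixed gain (or negated loss) function $\ell(\hat y, x)$ and some prior $\pi$ on inputs. Then
\[
\mathcal{U}(M) \;=\; \sum_y \max_{\hat y} \sum_x \pi_x M_{x,y}\, \ell(\hat y,x) \;=\; V_\ell(\pi, M).
\]
This is exactly a posterior $g$-vulnerability with $g=\ell$.

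For the forward direction, assume $\varepsilon \geq \varepsilon'$. Since $\mathbb{F}_M$ is a refinement family, $M_\varepsilon \refby M_{\varepsilon'}$, so there is a channel $W$ with $M_\varepsilon W = M_{\varepsilon'}$. I then prove the standard data-processing inequality $V_g(\pi, BW)\le V_g(\pi,B)$ directly. Write
\[
\sum_z \max_w \sum_x \pi_x \sum_y B_{x,y}W_{y,z} g(w,x).
\]
Pull $\sum_y W_{y,z}$ outside the inner sum, bound the max of a sum by the sum of maxes, and use $\sum_z W_{y,z}=1$. This gives $V_\ell(\pi,M_{\varepsilon'})\le V_\ell(\pi,M_\varepsilon)$, i.e.\ $\mathcal{U}(M_\varepsilon)\ge \mathcal{U}(M_{\varepsilon'})$. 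The definition of refinement in the paper (``leaks no more to any average-case adversary for any prior'') already gives this immediately, so the step can simply be cited.

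For the reverse direction ($\Leftarrow$), I will argue by contraposition. If $\varepsilon < \varepsilon'$, the refinement family property applied with roles swapped gives $M_{\varepsilon'} \refby M_\varepsilon$, hence $\mathcal{U}(M_{\varepsilon'}) \ge \mathcal{U}(M_\varepsilon)$. This yields the weak inequality in the wrong direction. It is not literally a contradiction unless utilities are strictly ordered, so equality cases must be handled. This is the main obstacle: the biconditional as stated requires either strictness or an interpretation of $\iff$ as ``monotone non-decreasing in $\varepsilon$''. I would first try to obtain strictness from the refinement-family definition: if $\mathcal{U}(M_\varepsilon)=\mathcal{U}(M_{\varepsilon'})$ for all $\ell$ and $\pi$, the channels are refinement-equivalent, so $M_{\varepsilon'}\refby M_\varepsilon$, and the ``$\iff$'' of Definition~\ref{def:ref_family} forces $\varepsilon'\le\varepsilon$. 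For a single fixed utility, strictness cannot be guaranteed in general (a trivial gain function is constant). I would therefore state the result as monotonicity (non-decreasing in $\varepsilon$) for every utility, with the reverse implication holding when the utility separates non-equivalent channels, e.g.\ for the Bayes gain via the strict Bayes-capacity ordering.
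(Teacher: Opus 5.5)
Your proposal is correct and is essentially the paper's argument: the paper's one-line proof combines the biconditional in Definition~\ref{def:ref_family} with the characterisation $A \refby B \iff \mathcal{U}(A) \geq \mathcal{U}(B)$ for all expected-gain utilities, which is your forward step plus the all-utilities converse. You are right on two points the paper leaves implicit: the stated $\iff$ only holds when quantified over all utilities (a constant gain, or a point prior, gives the same $\mathcal{U}$ for every $\varepsilon$), and $\mathcal{U}$ must be read as a gain rather than a loss.

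Two small fixes are needed. First, in the converse, use refinement-equivalence in the form $M_\varepsilon \refby M_{\varepsilon'}$, since that is what forces $\varepsilon \geq \varepsilon'$; the form $M_{\varepsilon'} \refby M_\varepsilon$ only gives $\varepsilon' \geq \varepsilon$. More simply, go directly from ``$\mathcal{U}(M_\varepsilon) \geq \mathcal{U}(M_{\varepsilon'})$ for all utilities'' to $M_\varepsilon \refby M_{\varepsilon'}$ and then to $\varepsilon \geq \varepsilon'$, with no contraposition. Second, strictness for the Bayes gain is a per-family fact, not a consequence of strict refinement, because merging two columns with the same maximising row leaves $\BayesCap$ unchanged. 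It does hold for GRR, where $\BayesCap = k e^{\varepsilon}/(e^{\varepsilon}+k-1)$ is strictly increasing.
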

\begin{proof}
    This follows directly from refinement, since $A \refby B \iff \mathcal{U}(A) \geq \mathcal{U}(B)$ for all utility functions $\mathcal{U}$ expressible using loss functions under expectation.
\end{proof}

We also recall Eqn~\eqref{eqn:alpha_beta} describing the 2x2 channel in terms of $\alpha$ and $\beta$, which we use in the following manner:

\begin{definition}\label{def:alpha_beta}
Let $C$ be a 2x2 channel written as:
\[
   C = \begin{bmatrix}
       1 - \alpha & \alpha \\
       \beta & 1 - \beta
   \end{bmatrix}
\]    
where $\beta \leq 1 - \alpha$. Then the trade-off function $T_C$ is characterized by the point $(\alpha, \beta)$ which we call the \emph{trade-off point} for $C$.
\end{definition}

We now show how trade-off points relate to refinement.

\begin{restatable}{lemma}{lemmaequiv}\label{lem:equiv}
    Let $C, D$ be 2x2 channels with trade-off points ($\alpha_C, \beta_C$) and ($\alpha_D, \beta_D$) respectively. Then the following statements are equivalent:
    \begin{enumerate}
        \item $C \refby D$
        \item The posteriors of $D$ under a uniform prior are inside the convex hull of the posteriors of $C$
        \item $\frac{\beta_C}{1 - \alpha_C} \leq \frac{\beta_D}{1 - \alpha_D}$ and $\frac{1 - \beta_C}{\alpha_C} \geq \frac{1 - \beta_D}{\alpha_D}$.
    \end{enumerate}
\end{restatable}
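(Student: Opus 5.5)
We prove the cycle (1)$\Leftrightarrow$(2)$\Leftrightarrow$(3) by working with posteriors on two secrets. With a uniform prior on $\{x_1,x_2\}$, a posterior is a distribution $(\delta,1-\delta)$, so we identify it with the scalar $\delta\in[0,1]$ (the probability of $x_1$). For $C$ in trade-off form, the two columns give posteriors
\[
\delta_1^C = \frac{1-\alpha_C}{1-\alpha_C+\beta_C}, \qquad \delta_2^C = \frac{\alpha_C}{\alpha_C+1-\beta_C}.
\]
The condition $\beta_C\le 1-\alpha_C$ gives $\delta_1^C\ge 1/2\ge\delta_2^C$. The convex hull of $C$'s posteriors is therefore the interval $[\delta_2^C,\delta_1^C]$, and likewise for $D$. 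Degenerate zero columns, where $\alpha=0$ or $1-\beta=0$, we treat by the convention $x/0=\infty$. In the trivial channel case $\alpha+\beta=1$ both posteriors equal $1/2$.

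\textbf{(2)$\Leftrightarrow$(3).} Statement (2) says $\delta_1^D\le\delta_1^C$ and $\delta_2^D\ge\delta_2^C$. The map $t\mapsto 1/(1+t)$ is decreasing, and $\delta_1=1/(1+\beta/(1-\alpha))$, so $\delta_1^D\le\delta_1^C$ is equivalent to $\beta_C/(1-\alpha_C)\le\beta_D/(1-\alpha_D)$. Similarly $\delta_2=1/(1+(1-\beta)/\alpha)$, so $\delta_2^D\ge\delta_2^C$ is equivalent to $(1-\beta_C)/\alpha_C\ge(1-\beta_D)/\alpha_D$. This step is a direct computation.

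\textbf{(1)$\Leftrightarrow$(2).} We use the standard QIF characterization of refinement through hyper-distributions. Under the uniform prior $\pi$, write $[\pi\triangleright C]$ for the distribution on posteriors, with outer weights $\Pr[y]$. Refinement $C\refby D$ holds iff $[\pi\triangleright D]$ is obtained from $[\pi\triangleright C]$ by a merge/post-processing. Equivalently, each posterior of $D$ is a convex combination of posteriors of $C$, with weights that are consistent with the outer probabilities; this uses the fact that refinement may be checked at any single full-support prior.

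For the forward direction, suppose $D=CW$. Each column of $D$ is then a nonnegative combination of the columns of $C$, so its posterior is a convex combination of $C$'s posteriors and lies in their hull.

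For the converse, write each $\delta_j^D$ as $\lambda_j\delta_1^C+(1-\lambda_j)\delta_2^C$. We then need to construct a $2\times2$ stochastic $W$ with $CW=D$. This is the main obstacle, because the inner posteriors alone do not fix the weights; the outer probabilities must also match. The resolution is the Bayes-plausibility (martingale) constraint. Both hypers have mean $1/2$, and on a one-dimensional simplex a two-point distribution supported inside $[\delta_2^C,\delta_1^C]$ with mean $1/2$ is a mean-preserving contraction of $C$'s two-point distribution with the same mean. By the Blackwell/Strassen theorem, such a contraction is achieved by a garbling. Concretely, we can solve $W=C^{-1}D$ when $C$ is invertible (that is, $\alpha+\beta<1$; otherwise $C$ is trivial and everything is equivalent). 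We then check that $W$ has nonnegative entries. Its rows sum to $1$ automatically, since $C$ and $D$ are stochastic. Nonnegativity reduces to exactly the two inequalities in (3), which closes the cycle.

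Alternatively, for (1)$\Leftrightarrow$(3) we can invoke the $f$-DP equivalence $T_C\le T_D\iff C\refby D$ from \cite{fernandes2025composition}. The function $T_D$ is the polyline through $(0,1),(\alpha_D,\beta_D),(1,0)$, and $T_C\le T_D$ holds iff the point $(\alpha_C,\beta_C)$ lies on or below that polyline. This condition amounts to two slope comparisons, which rearrange to (3). We will use this route as a cross-check of the sign conventions.
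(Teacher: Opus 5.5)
Your proof is correct. The $(2)\Leftrightarrow(3)$ step is the same monotone rearrangement of posterior coordinates that the paper uses, but the link to refinement goes a different way.

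\textbf{How the paper does it.} The paper cites $(1)\Rightarrow(2)$ from the QIF literature. It proves $(2)\Rightarrow(1)$ inside the hyper-distribution model:
\begin{itemize}
\item it writes each posterior of $D$ as a convex combination of $\delta_{C1},\delta_{C2}$;
\item it uses Bayes plausibility, $p_C\delta_{C1}+q_C\delta_{C2}=u=p_D\delta_{D1}+q_D\delta_{D2}$, together with the linear independence of $\delta_{C1},\delta_{C2}$, to force the mixing weights to reproduce $C$'s outer probabilities;
\item it then relies on the fact that such a hyper-refinement at the uniform prior is channel refinement.
\end{itemize}
Your Blackwell/Strassen remark is the abstract form of exactly this argument.

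\textbf{How your route differs.} You prove $(1)\Rightarrow(2)$ directly from $D=CW$. You close $(3)\Rightarrow(1)$ by computing the unique candidate $W=C^{-1}D$. This gives an elementary, self-contained proof with no hyper-distribution machinery, and the witness is explicit. Because $W$ is unique for invertible $C$, it also yields $(1)\Leftrightarrow(3)$ in both directions at once. The paper's route instead shows geometrically why the hull condition alone suffices on two secrets: the outer weights are forced.

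\textbf{Details to write out.} When you carry out the computation, state the following explicitly; this is what justifies ``nonnegativity reduces exactly to (3)''.
\begin{itemize}
\item $\det C=1-\alpha_C-\beta_C>0$.
\item After multiplying by $\det C$, $W_{12}\ge 0$ and $W_{21}\ge 0$ are exactly the two inequalities of (3).
\item $W_{11}\ge 0$ and $W_{22}\ge 0$ hold automatically. They amount to $(1-\beta_C)(1-\alpha_D)\ge\alpha_C\beta_D$ and $(1-\alpha_C)(1-\beta_D)\ge\beta_C\alpha_D$, which follow from $\beta\le 1-\alpha$ for both channels.
\end{itemize}
Your separate handling of the trivial case $\alpha_C+\beta_C=1$ is also welcome. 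It covers a case the paper's linear-independence step silently excludes.

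\textbf{Your $f$-DP cross-check is wrong as stated.} The condition ``$(\alpha_C,\beta_C)$ lies on or below the polyline $T_D$'' is the single inequality $\beta_C\le T_D(\alpha_C)$. It is necessary but not sufficient for $T_C\le T_D$, because the convex $T_D$ can pass above the vertex of $T_C$ yet below one of its segments.
\begin{itemize}
\item Take $(\alpha_C,\beta_C)=(0.3,0.3)$ and $(\alpha_D,\beta_D)=(0.1,0.5)$.
\item Then $T_D(0.3)=7/18\ge 0.3$, so your condition holds.
\item But $T_C(0.1)=23/30>0.5=T_D(0.1)$, so $T_C\not\le T_D$.
\item Consistently, $(1-\beta_C)/\alpha_C=7/3<5=(1-\beta_D)/\alpha_D$, and your $W$ has $W_{12}=-0.2$.
\end{itemize}
The correct criterion is that $(\alpha_C,\beta_C)$ lies on or below \emph{both} lines extending the two segments of $T_D$. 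Equivalently, $(\alpha_D,\beta_D)$ lies on or above the polyline $T_C$. Those are the two slope comparisons that rearrange to (3). Since this was only a cross-check, your main proof is unaffected.
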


The following sufficiency condition will simplify some of the proofs.

\begin{restatable}{lemma}{sufficiency}\label{lem:sufficiency}
Let $C, D$ be 2x2 channels with trade-off points ($\alpha_C, \beta_C$) and ($\alpha_D, \beta_D$) respectively. Then $\alpha_C \leq \alpha_D$ and $\beta_C \leq \beta_D$ implies that $C \refby D$.
\end{restatable}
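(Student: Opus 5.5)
We will derive Lemma~\ref{lem:sufficiency} directly from condition~3 of Lemma~\ref{lem:equiv}, the two ratio inequalities. Since $C \refby D$ is equivalent to
\[
\frac{\beta_C}{1-\alpha_C} \le \frac{\beta_D}{1-\alpha_D}
\qquad\text{and}\qquad
\frac{1-\beta_C}{\alpha_C} \ge \frac{1-\beta_D}{\alpha_D},
\]
it suffices to show that the hypotheses $\alpha_C \le \alpha_D$ and $\beta_C \le \beta_D$ imply both inequalities.

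\textbf{First inequality.} Because $\alpha_C \le \alpha_D$, we have $1-\alpha_C \ge 1-\alpha_D$. The numerators satisfy $0 \le \beta_C \le \beta_D$. Decreasing a nonnegative denominator and increasing a nonnegative numerator can only increase the fraction, so
\[
\frac{\beta_C}{1-\alpha_C} \le \frac{\beta_D}{1-\alpha_C} \le \frac{\beta_D}{1-\alpha_D}.
\]

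\textbf{Second inequality.} Symmetrically, $\beta_C \le \beta_D$ gives $1-\beta_C \ge 1-\beta_D \ge 0$, and $\alpha_C \le \alpha_D$. A larger nonnegative numerator over a smaller positive denominator gives a larger fraction, so
\[
\frac{1-\beta_C}{\alpha_C} \ge \frac{1-\beta_D}{\alpha_C} \ge \frac{1-\beta_D}{\alpha_D}.
\]
Combining the two chains gives condition~3, and hence $C \refby D$ by Lemma~\ref{lem:equiv}.

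\textbf{Degenerate cases.} The main thing to watch is denominators equal to zero.
\begin{itemize}
\item If $\alpha_C = 0$, the second ratio is read as $+\infty$ (or $\ge$ anything), so the second inequality holds trivially.
\item If $\alpha_D = 1$, the constraint $\beta_D \le 1-\alpha_D$ forces $\beta_D = 0$, so $\beta_C = 0$ as well. Both trade-off functions are then the trivial line, and $D$ is the identity-free (totally uninformative) channel, which every channel refines to.
\end{itemize}
Alternatively, these boundary cases can be handled with the $f$-DP characterization: $T_C \le T_D$ pointwise. The trade-off function of a $2\times 2$ channel is the piecewise-linear curve through $(0,1)$, $(\alpha,\beta)$, $(1,0)$, and it is monotone in the corner point. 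Moving the corner point up or right, with $\alpha$ and $\beta$ both nondecreasing, raises the curve pointwise, so $T_C \le T_D$ and refinement follows from the equivalence in \cite{fernandes2025composition}.

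The only real subtlety is this edge-case bookkeeping; the main argument is elementary monotonicity of fractions.
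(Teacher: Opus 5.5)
Your proof is correct and is essentially the paper's: both reduce to condition (3) of Lemma~\ref{lem:equiv} and chain two monotonicity-of-fractions steps. The only difference is the intermediate term in the second chain, where you use $\frac{1-\beta_D}{\alpha_C}$ and the paper uses $\frac{1-\beta_C}{\alpha_D}$, which is immaterial.

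Your boundary-case bookkeeping for zero denominators goes beyond the paper, which ignores them. One slip there: when $\alpha_D = 1$, only $D$'s trade-off function is forced to be the trivial line, since $C$ has $\beta_C = 0$ but arbitrary $\alpha_C$. The conclusion still holds because $D$ is then the uninformative channel.
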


We are now ready to study the refinement properties of the LDP protocols considered in this paper.

\subsubsection{The GRR Mechanism}

The refinement order of the GRR mechanism parametrized by $\varepsilon$ was studied in \cite{Chatziko2019} in which the following was shown:

\begin{lemma}~\cite{Chatziko2019}
    The family $\mathbb{F}_{\mathrm{GRR}}$ of GRR mechanisms is a refinement family.
\end{lemma}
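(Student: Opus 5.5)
The plan is to prove the forward direction of Definition~\ref{def:ref_family} directly: if $\varepsilon \geq \varepsilon'$, then $\grrm^{\varepsilon} \refby \grrm^{\varepsilon'}$. As noted after that definition, the reverse direction holds in general. By the definition of refinement, it suffices to exhibit a row-stochastic matrix $W$ with $\grrm^{\varepsilon}\cdot W = \grrm^{\varepsilon'}$. The natural candidate is itself a GRR-shaped ("symmetric") matrix: $W_{ii} = a$ and $W_{ij} = b$ for $i \neq j$, with $a + (k-1)b = 1$. This choice is motivated by the fact that all $k\times k$ matrices of the form $sI + tJ$ commute and are closed under multiplication. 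Here $J$ is the all-ones matrix.

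\textbf{Step 1.} Write $\grrm^{\varepsilon} = (p-q)I + qJ$ with $p-q = \frac{e^{\varepsilon}-1}{e^{\varepsilon}+k-1}$. Do the same for $W = (a-b)I + bJ$.

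\textbf{Step 2.} Compute the product. Using $J^2 = kJ$ and the row sums being $1$, the product $\grrm^{\varepsilon} W$ is again symmetric with rows summing to $1$. Its diagonal-minus-off-diagonal gap is $(p-q)(a-b)$. A symmetric stochastic matrix is determined by this gap, so matching $\grrm^{\varepsilon'}$ reduces to
\[
(a-b) \;=\; \frac{p'-q'}{p-q} \;=\; \frac{(e^{\varepsilon'}-1)(e^{\varepsilon}+k-1)}{(e^{\varepsilon}-1)(e^{\varepsilon'}+k-1)} =: \lambda .
\]
Solving gives $b = (1-\lambda)/k$ and $a = b + \lambda$.

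\textbf{Step 3.} Verify that $W$ is a genuine channel, i.e.\ $0 \le \lambda \le 1$, so that $a, b \geq 0$. Nonnegativity is immediate for $\varepsilon' \geq 0$. The inequality $\lambda \le 1$ says that the map $t \mapsto \frac{t-1}{t+k-1}$ is nondecreasing in $t = e^{\varepsilon}$. This holds because its derivative is $\frac{k}{(t+k-1)^2} > 0$.

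The degenerate case $\varepsilon = 0$ forces $\varepsilon' = 0$ and is trivial, so we take $W = I$. I expect no real obstacle here. The only delicate point is Step 3, namely confirming that the post-processing matrix is nonnegative, and the monotonicity computation handles it. As an alternative for $k=2$, one could instead invoke Lemma~\ref{lem:sufficiency}: the trade-off point is $(q,q)$, and $q$ is decreasing in $\varepsilon$. The explicit $W$ construction, however, covers all $k$ uniformly.
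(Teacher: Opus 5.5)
Your proof is correct. The paper does not prove this lemma itself; it imports it from \cite{Chatziko2019}. Your explicit witness is therefore a self-contained replacement rather than a reconstruction of an argument in the paper.

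Since your $W$ satisfies $a-b=\lambda\ge 0$, for $\varepsilon>\varepsilon'$ it is itself a GRR matrix $\grrm^{\varepsilon''}$ with $e^{\varepsilon''}=a/b$. What you are really showing is $\grrm^{\varepsilon'}=\grrm^{\varepsilon}\cdot\grrm^{\varepsilon''}$, i.e.\ that the GRR family is closed under cascading, and this gives the witness uniformly in $k$.

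Your $k=2$ alternative via Lemma~\ref{lem:sufficiency} fits the trade-off-point machinery the paper uses for the UE and THE families. It does not extend to $k>2$, however. Lemma~\ref{lem:equiv} is specific to $2\times 2$ channels, and on more secrets its convex-hull condition (2) only captures max-case refinement. For general $k$ an explicit post-processing matrix like yours is what is needed.

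One small addition concerns the converse. You rely on the paper's remark that it holds ``in general''. That remark tacitly requires $\varepsilon$ to be the \emph{tight} LDP parameter of $M_\varepsilon$: a family of identical constant channels is an $\varepsilon$-family for which the converse fails. For GRR the converse is one line, so it is worth stating explicitly. If $\grrm^{\varepsilon}\cdot W=\grrm^{\varepsilon'}$, then $\grrm^{\varepsilon'}$ is $\varepsilon$-LDP by post-processing. Its ratio $p'/q'=e^{\varepsilon'}$ is attained, which forces $\varepsilon'\le\varepsilon$.
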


\subsubsection{UE and THE Mechanisms}

These mechanisms are derived from 2x2 mechanisms operating on bits. The following construction will be helpful in proving some properties of these mechanisms.

\begin{enumerate}
\item Start with a 2x2 mechanism $B$ which takes a bit in $\{0, 1\}$ to a bit in $\{0, 1\}$. 
\item Construct the $k$-wise Kronecker product $B^{\otimes k}$ which takes a $k$-bit vector to a $k$-bit vector. Note that the domain of $B^{\otimes k}$ is the set of all $k$-bit vectors.
\item Delete from $B^{\otimes k}$ all rows other than those that correspond to one-hot vectors. Call this mechanism $B^{\otimes k}_{\textrm{hot}}$.
\end{enumerate}

Notice that $B^{\otimes k}_{\textrm{hot}}$ is a channel. The above construction has some nice properties with respect to refinement.

\begin{restatable}{lemma}{kronrefinement}\label{lem:kron_refinement}
    Let $A, B$ be bitwise channels and let $A \refby B$. Then $A^{\otimes k} \refby B^{\otimes k}$ and $A^{\otimes k}_{\mathrm{hot}} \refby B^{\otimes k}_{\mathrm{hot}}$.
\end{restatable}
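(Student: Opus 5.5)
We work with the characterization of refinement recalled in the background: $A \refby B$ if and only if there is a channel $W$ with $A \cdot W = B$. Since $A, B$ are $2{\times}2$ bitwise channels and $A \refby B$, we fix a $2{\times}2$ row-stochastic $W$ with $AW = B$. (Should $B$ have been written after a column swap to reach the trade-off form, we absorb that permutation into $W$; this causes no difficulty because permutation matrices are stochastic.)

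For the first claim we use the mixed-product property of the Kronecker product:
\[
A^{\otimes k}\, W^{\otimes k} \;=\; (AW)^{\otimes k} \;=\; B^{\otimes k}.
\]
It then suffices to check that $W^{\otimes k}$ is a channel. Its entries are products of nonnegative numbers, so they are nonnegative. Each row of $W^{\otimes k}$ sums to the product of the corresponding row sums of $W$, which equals $1$. Hence $W^{\otimes k}$ is a $2^k{\times}2^k$ stochastic matrix witnessing $A^{\otimes k} \refby B^{\otimes k}$. Operationally, this corresponds to post-processing each output bit independently through $W$.

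For the hot version we observe that deleting rows of a matrix commutes with right multiplication. Let $P$ be the $k{\times}2^k$ $0/1$ matrix that selects the one-hot rows. Then $A^{\otimes k}_{\mathrm{hot}} = P A^{\otimes k}$ and $B^{\otimes k}_{\mathrm{hot}} = P B^{\otimes k}$, and so
\[
A^{\otimes k}_{\mathrm{hot}}\, W^{\otimes k} \;=\; P A^{\otimes k} W^{\otimes k} \;=\; P B^{\otimes k} \;=\; B^{\otimes k}_{\mathrm{hot}}.
\]
The same $W^{\otimes k}$ is therefore a witness, and $A^{\otimes k}_{\mathrm{hot}} \refby B^{\otimes k}_{\mathrm{hot}}$ follows. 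The output alphabets agree: both are all $k$-bit vectors, so the composition is well typed.

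None of these steps is deep. The only point requiring care is making sure the refinement witness is a genuine right factor under the paper's convention that $B \refby A$ means $B \cdot W = A$. We must also confirm that the column orderings of $A$, $B$, and their Kronecker powers are consistent, so that the mixed-product identity applies entrywise with the same indexing of $\{0,1\}^k$. Once $W$ is fixed as an honest $2{\times}2$ stochastic matrix on the original (unswapped) column labels, the argument goes through directly.
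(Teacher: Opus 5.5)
Your proof is correct and follows the paper's route. The hot case is exactly the paper's argument: deleting rows commutes with right multiplication, so the same witness works for the restricted channels. The only difference is that the paper cites prior work for $A^{\otimes k} \refby B^{\otimes k}$, whereas you prove it directly via $A^{\otimes k} W^{\otimes k} = (AW)^{\otimes k}$ and the stochasticity of $W^{\otimes k}$, which makes your argument self-contained.
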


This tells us that it is sufficient to prove refinement on bitwise channels for the UE and THE mechanisms in order to have refinement on the overall channels.

We first consider the set of THE mechanisms. Recall that the THE bitwise mechanism is defined

\[
   T_{\varepsilon, \theta} ~=~ \begin{bmatrix}
       1-\frac{1}{2} e^{\frac{-\varepsilon \theta}{2}} & \frac{1}{2} e^{\frac{-\varepsilon \theta}{2}} \\
       \frac{1}{2}e^{\frac{\varepsilon (\theta - 1)}{2}} & 1 - \frac{1}{2}e^{\frac{\varepsilon (\theta - 1)}{2}}
   \end{bmatrix}
\]

\noindent for $\theta \in (0.5, 1)$. We have the following:

\begin{restatable}{lemma}{neyman}\label{lem:neyman}
Let $\theta \in (0.5,1)$, and let $\varepsilon, \varepsilon' \geq 0$ with $\varepsilon \leq \varepsilon'$. Then $T_{\varepsilon', \theta} \refby T_{\varepsilon, \theta}$.
\end{restatable}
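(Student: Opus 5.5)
The plan is to reduce the claim to the sufficiency condition of Lemma~\ref{lem:sufficiency}. We read off the trade-off points of $T_{\varepsilon,\theta}$ and $T_{\varepsilon',\theta}$ and check that both coordinates decrease monotonically in $\varepsilon$.

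First, I would check that the bitwise matrix $T_{\varepsilon,\theta}$ is already in the canonical form of Definition~\ref{def:alpha_beta}, so that no column swap is needed. Set
\[
\alpha(\varepsilon) = \tfrac{1}{2} e^{-\varepsilon\theta/2}, \qquad \beta(\varepsilon) = \tfrac{1}{2} e^{-\varepsilon(1-\theta)/2}.
\]
Then the first row is $(1-\alpha, \alpha)$ and the second row is $(\beta, 1-\beta)$. The side condition $\beta \le 1-\alpha$ is equivalent to $\alpha+\beta \le 1$. This holds because $\varepsilon \ge 0$, $\theta>0$ and $1-\theta>0$ give exponents $\le 0$, so $\alpha \le \tfrac12$ and $\beta \le \tfrac12$. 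Hence the trade-off point of $T_{\varepsilon,\theta}$ is $(\alpha(\varepsilon),\beta(\varepsilon))$.

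Second, I would observe that for fixed $\theta\in(0.5,1)$ both $\alpha(\varepsilon)$ and $\beta(\varepsilon)$ are non-increasing in $\varepsilon$. Both exponents $-\theta/2$ and $-(1-\theta)/2$ are strictly negative; this is exactly where $\theta<1$ is used, and it is needed so that $\beta$ actually depends on $\varepsilon$. Therefore $\varepsilon \le \varepsilon'$ gives $\alpha(\varepsilon') \le \alpha(\varepsilon)$ and $\beta(\varepsilon') \le \beta(\varepsilon)$.

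Third, I would apply Lemma~\ref{lem:sufficiency} with $C = T_{\varepsilon',\theta}$ and $D = T_{\varepsilon,\theta}$. Its hypotheses $\alpha_C\le\alpha_D$ and $\beta_C\le\beta_D$ are exactly the inequalities from the second step, so $T_{\varepsilon',\theta}\refby T_{\varepsilon,\theta}$ follows. As an optional cross-check, condition~3 of Lemma~\ref{lem:equiv} can be verified directly. The ratio $\beta/(1-\alpha)$ increases as $\varepsilon$ decreases. The ratio $(1-\beta)/\alpha$ equals $2e^{\varepsilon\theta/2}-e^{\varepsilon(2\theta-1)/2}$; it needs a short derivative argument using $2\theta-1<\theta$ and $\theta>1/2$ to show it is non-decreasing in $\varepsilon$. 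Reading this as a Neyman--Pearson statement, the whole trade-off curve of the less noisy test lies below that of the noisier one.

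The only real obstacle is bookkeeping rather than analysis. One must get the orientation right, that is, which of the two channels plays $C$ and which plays $D$, consistent with the convention that $C\refby D$ means $D$ is safer. One must also confirm that the rows are labelled so that $\beta\le 1-\alpha$ holds without swapping columns. The degenerate case $\varepsilon=0$ gives $\alpha=\beta=\tfrac12$, the fully non-informative channel, and is covered by the same argument.
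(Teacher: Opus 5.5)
Your proposal is correct and follows the paper's proof essentially verbatim. Like the paper, you read off the trade-off point $\bigl(\tfrac12 e^{-\varepsilon\theta/2},\, \tfrac12 e^{-\varepsilon(1-\theta)/2}\bigr)$, note that both coordinates are non-increasing in $\varepsilon$, and apply Lemma~\ref{lem:sufficiency} with $C = T_{\varepsilon',\theta}$ and $D = T_{\varepsilon,\theta}$. Your explicit check that $\alpha+\beta\le 1$ (so no column swap is needed) is a small addition the paper leaves implicit, and the derivative argument in your optional cross-check is unnecessary, since $(1-\beta)/\alpha$ has a non-decreasing numerator and a non-increasing positive denominator in $\varepsilon$.
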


We give an example of such THE mechanisms in Figure \ref{fig:trade-off}. This now gives our main result for THE mechanisms.

\begin{figure}
    \centering
 \includegraphics[width=0.5\linewidth]{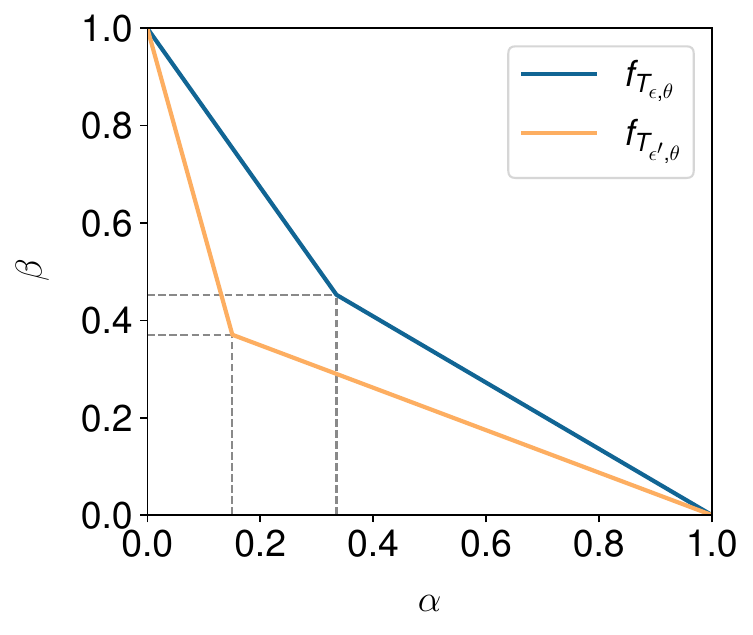}
    \caption{Example showing ordering between $f$-privacy trade-off functions for THE mechanisms when $\alpha, \beta$ values are ordered, i.e. $\alpha > \alpha'$ and $\beta > \beta'$.}\label{fig:trade-off}
\end{figure}

\begin{lemma}
The family $\mathbb{F}_{\mathrm{THE}}$ of THE mechanisms is a refinement family.
\begin{proof}
   Immediate from Lemma \ref{lem:neyman} and Lemma~\ref{lem:kron_refinement}.
\end{proof}
\end{lemma}

Similarly, we find that SUE mechanisms also form a refinement family.

\begin{lemma}
    The family $\mathbb{F}_{\mathrm{SUE}}$ of SUE mechanisms is a refinement family.
\begin{proof}
The proof follows similarly to the proof for THE mechanisms given in Lemma \ref{lem:neyman}.
Consider first the bitwise mechanism for SUE, given by:
\[
    S_{\varepsilon} = \begin{bmatrix}
        \frac{e^{\varepsilon/2}}{e^{\varepsilon/2}+1} & \frac{1}{e^{\varepsilon/2}+1} \\
        \frac{1}{e^{\varepsilon/2}+1} & \frac{e^{\varepsilon/2}}{e^{\varepsilon/2}+1}
    \end{bmatrix}
\]
Observe that $e^{\varepsilon/2} \geq 1$ and so, using the Neyman-Pearson lemma, we have that the most powerful test at significance $\frac{1}{e^{\varepsilon/2}+1}$ has power $\frac{e^{\varepsilon/2}}{e^{\varepsilon/2}+1}$. That is, we have $\alpha = \beta = \frac{1}{e^{\varepsilon/2}+1}$.

Next, if $\varepsilon \leq \varepsilon'$ then $\alpha \geq \alpha'$ and $\beta \geq \beta'$ which gives a pointwise ordering on trade-off functions for $S_\varepsilon, S_{\varepsilon'}$. This implies that $S_{\varepsilon'} \refby S_{\varepsilon}$.
The result follows from Lemma~\ref{lem:kron_refinement}.
\end{proof}
\end{lemma}

Finally, for the OUE mechanisms:

\begin{lemma}
    The family $\mathbb{F}_{\mathrm{OUE}}$ of OUE mechanisms is a refinement family.
\begin{proof}
    As for the THE and SUE mechanisms, consider the bitwise OUE mechanism given by
    \[
    \begin{bmatrix}
        \frac{e^{\varepsilon}}{e^{\varepsilon}+1} & \frac{1}{e^{\varepsilon}+1} \\
        1/2 & 1/2
    \end{bmatrix}
    \]
Notice that $\frac{e^{\varepsilon}}{e^{\varepsilon}+1} \geq 1/2$ and therefore from Neyman-Pearson we have $\alpha = \frac{1}{e^{\varepsilon}+1}$ and $\beta = 1/2$.
Now, given $\varepsilon \leq \varepsilon'$ we find that $\alpha \geq \alpha'$ but $\beta = \beta' = 1/2$. This  gives an ordering on trade-off functions so that $f_{O_\varepsilon} \geq f_{O_\varepsilon'}$ (from Lemma \ref{lem:sufficiency}) and thus $O_{\varepsilon'} \refby O_{\varepsilon}$. The result follows from Lemma~\ref{lem:kron_refinement}.
\end{proof}
\end{lemma}

\subsubsection{SS and LH Mechanisms} We did not find an obvious refinement relation for these mechanisms, so we leave a more in-depth study of them to future work.

\subsection{Comparing Mechanisms}\label{sec:comparing}

In this section, we study the relationship between the different refinement families identified in the previous section.


Firstly, notice that the UE and THE mechanisms have the same form: they perturb each bit of the secret independently, using a mechanism of the form:
\[
   B ~=~ \begin{bmatrix}
       (1-q) & q \\
       (1 - p) & p
   \end{bmatrix}
\]

The SUE (symmetric UE) protocol chooses $p = \frac{e^{\varepsilon/2}}{e^{\varepsilon/2} + 1}$ and $q = 1-p$. So the SUE channel is $\frac{\varepsilon}{2}$-DP. Interestingly, it is \emph{universally optimal} among the $\frac{\varepsilon}{2}$-DP mechanisms on 2 secrets (as shown in \cite{FernandesCSF22}). This means that it leaks the most information compared with any other mechanism; in QIF terms, it is the unique minimal element in the \emph{anti-refinement} chain for $\frac{\varepsilon}{2}$-DP mechanisms.

However, this unique optimality does not transfer through to the SUE mechanism defined on one-hot vectors for two reasons: firstly, (as shown in \cite{FernandesCSF22}), there are no unique optimal DP mechanisms on $>2$ secrets (where refinement is not a lattice); and secondly, because the other bitwise mechanisms we study here (OUE, THE) are not $\frac{\varepsilon}{2}$-DP on bits (and therefore the result of \cite{FernandesCSF22} does not apply). 

On the other hand, refinement between bitwise mechanisms \emph{does} transfer through to refinement on one-hot vector domains (Lemma~\ref{lem:kron_refinement}). Thus, we will continue by examining refinement relationships between the bitwise mechanisms, noting that this is sufficient but not necessary for refinement on the full mechanism. 


\vspace{0.2cm}
\noindent\textbf{OUE and SUE mechanisms}


We first remark (continuing from above) that the bitwise OUE mechanism has a different DP guarantee than the bitwise SUE mechanism. This is why we cannot use a simple ``universal optimality'' argument to show that the SUE is always better (for utility) than the OUE. In fact, we have the following:

\begin{restatable}{lemma}{ouesue}\label{lem:ouesue}
Let $S_{\varepsilon}, O_{\varepsilon}$ be the bitwise mechanisms for the SUE and OUE protocols, respectively, both parametrized by some $\varepsilon > 0$. Then $S_{\varepsilon} \not\refby O_{\varepsilon}$ and $O_{\varepsilon} \not\refby S_{\varepsilon}$.
\end{restatable}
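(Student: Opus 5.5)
The plan is to reduce both non-refinements to the criterion of Lemma~\ref{lem:equiv}, item~3, applied to the trade-off points of the two bitwise channels. These points were already identified in the proofs that SUE and OUE are refinement families. Write $s = e^{\varepsilon/2}$, so that $e^{\varepsilon} = s^2$, and note that $s>1$ because $\varepsilon>0$.

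\begin{itemize}
\item For $S_\varepsilon$ we have $\alpha_S = \beta_S = \frac{1}{s+1}$.
\item For $O_\varepsilon$, after putting the matrix into the form of Definition~\ref{def:alpha_beta} (swapping columns/rows if needed, which does not change leakage), we have $\alpha_O = \frac{1}{s^2+1}$ and $\beta_O = \frac12$.
\end{itemize}

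Before using these, I will check that the condition $\beta \le 1-\alpha$ holds in both cases, so that Lemma~\ref{lem:equiv} applies.

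Next I compute the two ratios appearing in item~3 for each channel. These are the extreme likelihood ratios of the two columns, i.e.\ the slopes determining the posteriors.
\begin{itemize}
\item For SUE: $\frac{\beta_S}{1-\alpha_S} = \frac{1}{s}$ and $\frac{1-\beta_S}{\alpha_S} = s$.
\item For OUE: $\frac{\beta_O}{1-\alpha_O} = \frac{s^2+1}{2s^2}$ and $\frac{1-\beta_O}{\alpha_O} = \frac{s^2+1}{2}$.
\end{itemize}

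To show $S_\varepsilon \not\refby O_\varepsilon$, I will show that the second inequality of item~3 fails. That inequality would require $s \ge \frac{s^2+1}{2}$, which is equivalent to $(s-1)^2 \le 0$. This is false for $s>1$.

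To show $O_\varepsilon \not\refby S_\varepsilon$, I will show that the first inequality of item~3, with the roles of the channels swapped, fails. It would require $\frac{s^2+1}{2s^2} \le \frac{1}{s}$, which is equivalent to $s^2+1 \le 2s$, i.e.\ again $(s-1)^2\le 0$. This is false for $s > 1$.

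So each direction is violated by exactly one of the two conditions. Geometrically, the posterior interval of SUE sticks out on one side of OUE's interval and is strictly inside it on the other, so neither convex hull contains the other (item~2).

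The main obstacle is bookkeeping rather than analysis: Lemma~\ref{lem:equiv} presumes each channel is written in the canonical orientation of Definition~\ref{def:alpha_beta}. The OUE bitwise matrix as displayed has rows in a different order from the SUE matrix and different ``first'' columns. I will therefore first fix a common labelling of the secrets (bit $0$ and bit $1$) and verify that the ratios computed above are the genuine column likelihood ratios $C_{0,y}/C_{1,y}$ in that common labelling. Only then will I conclude, so that the comparison is not an artefact of a row swap.

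As a sanity check, I will plug in $\varepsilon = 2\ln 2$ (so $s=2$). This gives SUE ratios $\{1/2, 2\}$ and OUE ratios $\{5/8, 5/2\}$, which are visibly interleaved, confirming incomparability.
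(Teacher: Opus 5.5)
Your proposal is correct and follows the paper's proof essentially exactly: the trade-off points are the same ($\alpha_S=\beta_S=\frac{1}{e^{\varepsilon/2}+1}$, $\alpha_O=\frac{1}{e^{\varepsilon}+1}$, $\beta_O=\frac12$), they are fed into Lemma~\ref{lem:equiv}(3) in the same way, and each direction of refinement fails on one of the two inequalities because $(e^{\varepsilon/2}-1)^2>0$. You are in fact more explicit than the paper about which inequality rules out which direction, and the one slip is cosmetic: the ratios you compute are $C_{1,y}/C_{0,y}$ rather than $C_{0,y}/C_{1,y}$, which does not matter since taking reciprocals preserves interval containment.
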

\begin{proof}[Proof (Sketch)]
    We use the $f$-privacy trade-off functions to compare $O_{\varepsilon}$ and $S_\varepsilon$, this time using Lemma \ref{lem:equiv} since the condition for Lemma \ref{lem:sufficiency} does not hold. The result follows.
\end{proof}

Note that this only means that neither \emph{bitwise} mechanism is better than the other (in terms of utility). Importantly, this does not mean that one mechanism is never better than the other for any utility measure; failure of refinement here means we cannot prove that for all utility measures one mechanism is universally better than the other.~\footnote{Also note that the comparison on bitwise mechanisms is only a sufficiency condition for refinement.}

\vspace{0.2cm}
\noindent\textbf{THE and OUE mechanisms}

\begin{figure}[!h]
    \centering
    \includegraphics[width=0.48\linewidth]{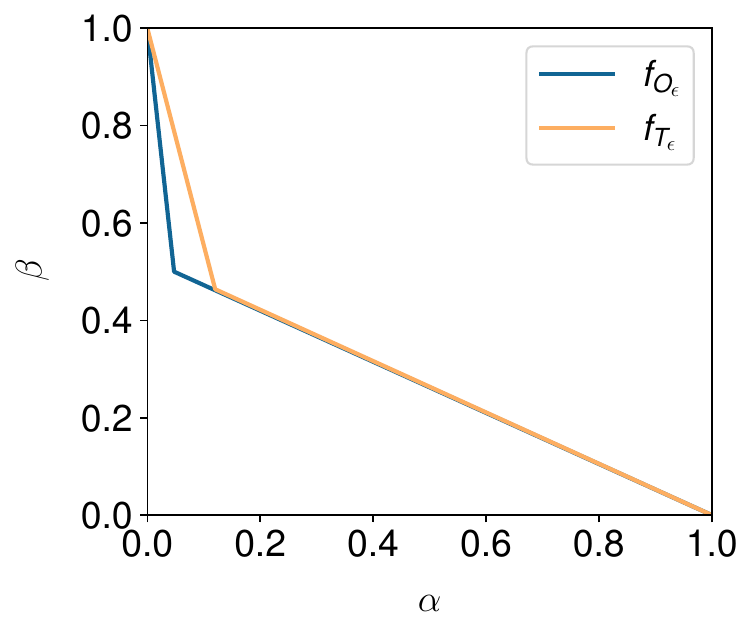}   \includegraphics[width=0.48\linewidth]{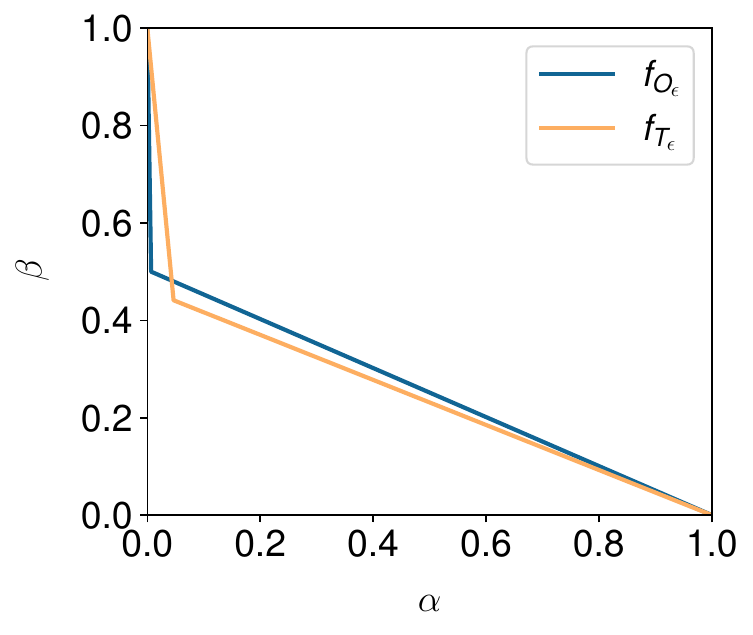}
    \caption{Refinement relation between OUE and THE mechanisms for $\theta = 0.95$. The left figure shows refinement for $\varepsilon = 3$ and the right shows non-refinement for $\varepsilon = 5$.}
    \label{fig:OUE_THE_relation}
\end{figure}

Recall that the THE mechanism is parametrized by both $\varepsilon$ and $\theta \in (1/2, 1)$. We find that the value of $\theta$ plays a role in the refinement relation between THE and OUE, as given in the following.

\begin{restatable}{lemma}{ouethe}\label{lem:theta_ineq}
Let $T_{\varepsilon, \theta}, O_{\varepsilon}$ be bitwise mechanisms for THE, OUE, respectively. Then $O_\varepsilon \refby T_{\varepsilon, \theta}$ exactly when 
\[
\theta \geq \frac{2 \ln(e^{\varepsilon} + 1 + (e^{\varepsilon/2} - 1)(\sqrt{e^\varepsilon + 1})) - \varepsilon - 2\ln 2}{\varepsilon}
\]
for $\varepsilon > 0$ and $\theta \geq 1/\sqrt{2}$ at $\varepsilon = 0$.
\end{restatable}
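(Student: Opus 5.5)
The plan is to reduce the statement to condition~3 of Lemma~\ref{lem:equiv} on the trade-off points of the two bitwise channels, and then solve the resulting inequalities for $\theta$.

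\textbf{Trade-off points.} First I put both channels in the form of Definition~\ref{def:alpha_beta}. For OUE the trade-off point is $\alpha_O = \frac{1}{e^\varepsilon+1}$ and $\beta_O = \tfrac12$, as already computed in the proof that $\mathbb{F}_{\mathrm{OUE}}$ is a refinement family. For THE I take $\alpha_T = \tfrac12 e^{-\varepsilon\theta/2}$ and $\beta_T = \tfrac12 e^{\varepsilon(\theta-1)/2}$. Then $\beta_T \le 1-\alpha_T$ holds trivially, since both $\alpha_T$ and $\beta_T$ are at most $\tfrac12$. By Lemma~\ref{lem:equiv}, $O_\varepsilon \refby T_{\varepsilon,\theta}$ holds iff
\[
\frac{\beta_O}{1-\alpha_O} \le \frac{\beta_T}{1-\alpha_T}
\quad\text{and}\quad
\frac{1-\beta_O}{\alpha_O} \ge \frac{1-\beta_T}{\alpha_T}.
\]

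\textbf{Reduction to quadratics.} I substitute $u = e^{\varepsilon\theta/2}$. The admissible range $\theta\in(\tfrac12,1)$ becomes $u \in (e^{\varepsilon/4}, e^{\varepsilon/2})$. Clearing denominators, the first inequality becomes
\[
2e^{\varepsilon/2}u^2 - 2(e^\varepsilon+1)u + (e^\varepsilon+1) \ge 0 .
\]
The second inequality becomes
\[
e^{-\varepsilon/2}u^2 - 2u + \tfrac{e^\varepsilon+1}{2} \ge 0 .
\]

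\textbf{The second condition always holds.} The discriminant of the second quadratic is proportional to $4 - 2(e^{\varepsilon/2}+e^{-\varepsilon/2})$. This is $\le 0$ by AM--GM. So the second quadratic is nonnegative for all $u$, and the second condition is automatic.

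\textbf{The first condition gives the threshold.} The discriminant of the first quadratic factors as
\[
(e^\varepsilon+1)\bigl(e^\varepsilon+1-2e^{\varepsilon/2}\bigr) = (e^\varepsilon+1)\bigl(e^{\varepsilon/2}-1\bigr)^2 .
\]
Hence its roots are
\[
u_\pm = \frac{e^\varepsilon+1 \pm (e^{\varepsilon/2}-1)\sqrt{e^\varepsilon+1}}{2e^{\varepsilon/2}} .
\]
Taking logarithms in $u \ge u_+$ and dividing by $\varepsilon/2$ yields exactly the stated bound on $\theta$.

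\textbf{Main obstacle.} The delicate step is justifying \emph{exactly when}, that is, excluding the other branch $u \le u_-$ of the quadratic inequality. I plan to show that $u_- < e^{\varepsilon/4}$ for every $\varepsilon > 0$, so this branch never meets the admissible range of $u$. My first attempt will be to compare $u_-$ and $e^{\varepsilon/4}$ after substituting $t = e^{\varepsilon/4}$. Since $u_+u_- = \tfrac{e^\varepsilon+1}{2e^{\varepsilon/2}}$, showing $u_- < t$ is equivalent to $u_+ > \tfrac{e^\varepsilon+1}{2t^3}$, which I would then check directly. If that gets messy, I will instead argue monotonically in $\varepsilon$, using the fact that $u_- \to 1$ as $\varepsilon \to 0$.

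\textbf{The case $\varepsilon = 0$.} The right-hand side of the bound is $0/0$ at $\varepsilon = 0$, so I treat it as a limit. I apply L'Hôpital's rule, or a first-order expansion with $\sqrt{e^\varepsilon+1} \approx \sqrt2$ and $e^{\varepsilon/2}-1 \approx \varepsilon/2$. The numerator is approximately $2\ln\bigl(2+\varepsilon+\tfrac{\varepsilon}{\sqrt2}\bigr) - \varepsilon - 2\ln2 \approx \varepsilon/\sqrt2$. Dividing by $\varepsilon$ gives the threshold $1/\sqrt2$ claimed in the statement.
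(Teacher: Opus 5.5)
Your proposal is correct and follows the paper's proof essentially step for step: you reduce to Lemma~\ref{lem:equiv}(3), substitute $u=e^{\varepsilon\theta/2}$, and obtain the same two quadratics; the second condition holds unconditionally (your discriminant/AM--GM argument is equivalent to the paper's completion of the square, which gives $\tfrac12 y(y-1)^2\ge 0$ with $y=e^{\varepsilon/2}$), and the first yields the same roots $u_\pm$ from the discriminant $4(e^\varepsilon+1)(e^{\varepsilon/2}-1)^2$. Your explicit exclusion of the branch $u\le u_-$, which the paper lists but never discards, is a genuine improvement and is easy: since $e^{\varepsilon/2}-1<\sqrt{e^\varepsilon+1}$ one gets $u_-<1\le u$ (or, along your product-of-roots route, $u_+\ge \frac{e^\varepsilon+1}{2t^2}>\frac{e^\varepsilon+1}{2t^3}$); and, as in the paper, the $\varepsilon=0$ clause can only be read as the limiting threshold, because at $\varepsilon=0$ both channels are the constant all-$\tfrac12$ channel, so refinement holds for every $\theta$.
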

\begin{proof}{(Sketch)}
We make use of Lemma \ref{lem:equiv} (3). The second inequality always holds, but the first inequality only holds for the value of $\theta$ shown above.
Since it is undefined at $\varepsilon = 0$, we can take limits to get the value of $1/\sqrt{2}$.
\end{proof}

To give some intuition for what this means, we remark that for $\theta = 0.5$, refinement never holds, and for $\theta = 1$, refinement always holds. In between these values (actually above $\theta = 1/\sqrt{2}$), $\theta$ is an increasing function of $\varepsilon$, and so refinement holds for small $\varepsilon$ values but fails to hold when $\varepsilon$ goes above a threshold. 

For example, setting $\varepsilon = 0.8$ in the above inequality produces $\theta \geq 0.8$. Conversely, this means that for $\theta = 0.8$, refinement holds for $\varepsilon \leq 0.8$ and breaks for $\varepsilon > 0.8$. An example is depicted in Figure \ref{fig:OUE_THE_relation}.


From Lemma \ref{lem:kron_refinement} it follows that the corresponding one-hot mechanisms are also in refinement when $\theta$ satisfies the above.

\begin{corollary}
    Let $T_{\varepsilon, \theta}, O_{\varepsilon}$ be bitwise mechanisms for THE, OUE respectively and let $\theta$ satisfy the inequality in Lemma~\ref{lem:theta_ineq}. Then the corresponding one-hot mechanisms $T^{\otimes k}_{\mathrm{hot}}, O^{\otimes k}_{\mathrm{hot}}$ satisfy $O^{\otimes k}_{\mathrm{hot}} \refby T^{\otimes k}_{\mathrm{hot}}$.
\end{corollary}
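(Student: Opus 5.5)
The corollary should follow in two steps: Lemma~\ref{lem:theta_ineq} gives refinement between the bitwise channels, and Lemma~\ref{lem:kron_refinement} lifts it to the one-hot mechanisms. First, fix $\varepsilon>0$ (or $\varepsilon=0$) and $\theta\in(1/2,1)$ satisfying the stated inequality. Lemma~\ref{lem:theta_ineq} then gives $O_\varepsilon \refby T_{\varepsilon,\theta}$ for the $2\times 2$ channels. By definition of refinement, this means there is a $2\times 2$ row-stochastic matrix $W$ with $O_\varepsilon W = T_{\varepsilon,\theta}$.

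Second, apply Lemma~\ref{lem:kron_refinement} with $A=O_\varepsilon$ and $B=T_{\varepsilon,\theta}$. This yields $O^{\otimes k}_{\mathrm{hot}} \refby T^{\otimes k}_{\mathrm{hot}}$, as required. The lemma is taken as given, but the mechanism behind it is short, and I would record it so the argument is transparent. The mixed-product property gives $O_\varepsilon^{\otimes k} W^{\otimes k} = (O_\varepsilon W)^{\otimes k} = T_{\varepsilon,\theta}^{\otimes k}$, and $W^{\otimes k}$ is again row-stochastic. Deleting the same non-one-hot rows on both sides commutes with right multiplication by $W^{\otimes k}$, because row selection is left multiplication by a $0/1$ selection matrix. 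Hence $(O_\varepsilon^{\otimes k})_{\mathrm{hot}}\, W^{\otimes k} = (T_{\varepsilon,\theta}^{\otimes k})_{\mathrm{hot}}$.

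The one point needing care is an identification rather than a computation: $O^{\otimes k}_{\mathrm{hot}}$ and $T^{\otimes k}_{\mathrm{hot}}$ must coincide with the OUE and THE channel matrices $\ouem^\varepsilon$ and $\them^\varepsilon$, with the bitwise matrices oriented consistently. For THE, the bit matrix has rows indexed by input bit $0,1$ and columns by output bit $0,1$. For the bitwise OUE matrix used in Lemma~\ref{lem:theta_ineq}, the first row $(e^\varepsilon/(e^\varepsilon+1),\,1/(e^\varepsilon+1))$ is the input-bit-$0$ row, i.e.\ $(1-q,q)$. Both have independent perturbation per coordinate, so each entry of the product $\prod_m \Pr[\mathbf y[m]\mid \mathbf x[m]]$ equals the corresponding Kronecker-product entry. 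If the two lemmas had used a column-swapped convention, the swap would be applied identically in every coordinate. It therefore amounts to a fixed output relabeling, which preserves refinement. With this checked, the corollary follows immediately.
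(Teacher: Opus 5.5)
Your proposal is correct and follows the paper's own argument: Lemma~\ref{lem:theta_ineq} gives $O_\varepsilon \refby T_{\varepsilon,\theta}$ on bits, and Lemma~\ref{lem:kron_refinement} lifts this to the one-hot mechanisms. Your mixed-product justification ($O_\varepsilon^{\otimes k} W^{\otimes k} = T_{\varepsilon,\theta}^{\otimes k}$, then the same row selection on both sides) is sound, and so is your orientation check. The orientation check is not strictly needed, though, because the corollary is stated directly for $O^{\otimes k}_{\mathrm{hot}}$ and $T^{\otimes k}_{\mathrm{hot}}$.
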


\vspace{0.2cm}
\noindent\textbf{THE and SUE mechanisms}

Here we find that there is always a refinement, regardless of the value taken by $\theta$ for the THE mechanism. 

\begin{restatable}{lemma}{thesue}\label{lem:sue_ref_the}
Let $T_{\varepsilon, \theta}, S_{\varepsilon}$ be bitwise mechanisms for THE, SUE, respectively. Then  $S_{\varepsilon} \refby T_{\varepsilon, \theta}$ for any $\theta \in (0.5,1)$.
\end{restatable}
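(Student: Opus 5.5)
We apply Lemma~\ref{lem:equiv}~(3) directly to the two bitwise channels. This means computing both trade-off points and checking the two ratio inequalities. Lemma~\ref{lem:sufficiency} does not apply here, because the $\alpha$'s and $\beta$'s are not ordered in the same direction.

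\textbf{Trade-off points.} For SUE, the proof that $\mathbb{F}_{\mathrm{SUE}}$ is a refinement family already gives $\alpha_S=\beta_S=\frac{1}{c+1}$, where we write $c=e^{\varepsilon/2}\geq 1$. Hence
\[
\frac{\beta_S}{1-\alpha_S}=\frac{1}{c}, \qquad \frac{1-\beta_S}{\alpha_S}=c .
\]
For THE, the bitwise channel has rows $(1-q,\;q)$ and $(1-p,\;p)$, with $q=\tfrac12 e^{-\varepsilon\theta/2}$ and $p=1-\tfrac12 e^{\varepsilon(\theta-1)/2}$. Since $p>q$, this matrix is already in the form of Definition~\ref{def:alpha_beta}. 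Thus $\alpha_T=q$ and $\beta_T=1-p$, and the side condition $\beta_T\le 1-\alpha_T$ holds. Put $a=e^{\varepsilon\theta/2}$, so that $1\le a\le c$ for $\theta\in(0.5,1)$. Then
\[
q=\frac{1}{2a}, \qquad 1-p=\frac{a}{2c}.
\]

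\textbf{The two inequalities.} The first condition of Lemma~\ref{lem:equiv}~(3) reads $\frac{1}{c}\le\frac{1-p}{1-q}$. The right-hand side equals
\[
\frac{a/(2c)}{(2a-1)/(2a)}=\frac{a^2}{c(2a-1)} .
\]
So the first condition is equivalent to $a^2\ge 2a-1$, that is, $(a-1)^2\ge 0$. The second condition reads $c\ge \frac{p}{q}$. Here
\[
\frac{p}{q}=2a\Bigl(1-\frac{a}{2c}\Bigr)=2a-\frac{a^2}{c},
\]
so the second condition is equivalent to $c^2-2ac+a^2\ge 0$, that is, $(a-c)^2\ge 0$. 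Both inequalities hold for every $\varepsilon\ge 0$ and every $\theta$. Lemma~\ref{lem:equiv} therefore gives $S_\varepsilon\refby T_{\varepsilon,\theta}$.

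By Lemma~\ref{lem:kron_refinement}, the same refinement lifts to the one-hot channels.

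\textbf{Main obstacle.} The algebra is routine. The step that needs care is the bookkeeping: confirming that the THE matrix is in the correct orientation, with no column swap needed, and that the direction of each inequality in Lemma~\ref{lem:equiv}~(3) matches which channel plays the role of $C$ and which plays $D$. Once these are fixed, both conditions collapse to perfect squares, which explains why the result holds independently of $\theta$.
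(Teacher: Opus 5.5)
Your proof is correct and is essentially the paper's own argument. Both apply Lemma~\ref{lem:equiv}(3) with $C=S_\varepsilon$, $D=T_{\varepsilon,\theta}$ and reduce the two conditions to $(e^{\varepsilon\theta/2}-1)^2\ge 0$ and $(e^{\varepsilon\theta/2}-e^{\varepsilon/2})^2\ge 0$; your substitution $a=e^{\varepsilon\theta/2}$, $c=e^{\varepsilon/2}$ just makes the bookkeeping cleaner. One small aside: your claim that Lemma~\ref{lem:sufficiency} does not apply holds only in general, not for every choice of parameters. One always has $\beta_S\le\beta_T$, and $\alpha_S\le\alpha_T$ holds exactly when $2a\le c+1$, e.g.\ for large $\varepsilon$. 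This plays no role in your proof.
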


\noindent\textbf{Remarks} On 2 secrets, refinement is a lattice~\cite{fernandes2025composition}, meaning that there is a minimal element that is a refinement of both OUE and SUE mechanisms, and a maximal element that is an anti-refinement of both OUE and SUE.

\begin{figure}[h]
    \centering
 \includegraphics[width=0.5\linewidth]{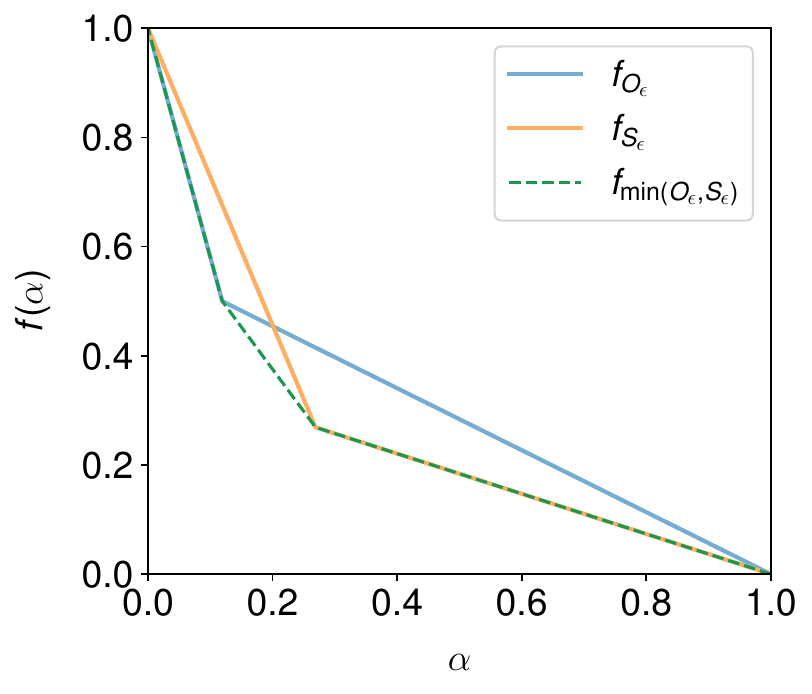}
    \caption{OUE and SUE mechanisms are not in refinement, but there is an element given by min(O, S) (green dotted line) which is the maximum element refined by both OUE and SUE.}\label{fig:min}
\end{figure}

An example of the latter is given in Figure \ref{fig:min}. The intuition behind the min of SUE and OUE is that it is the most private mechanism that is at least as useful as SUE and OUE (for all utility functions). The construction of the min in general has been shown~\cite{fernandes2025composition}, and we formulate it here as follows.

\begin{lemma}\label{lem:oue_min_sue}
Let $S_\varepsilon, O_\varepsilon$ be bitwise SUE and OUE mechanisms, respectively, parametrized by $\varepsilon \geq 0$. Then the mechanism $M_\varepsilon = S_\varepsilon \min O_\varepsilon$, defined
\[
      M_{\varepsilon} = \begin{bmatrix}
         \frac{e^{\varepsilon/2}}{e^{\varepsilon/2} + 1} &  \frac{1}{e^{\varepsilon/2} + 1} - \frac{1}{e^\varepsilon + 1} &  \frac{1}{e^{\varepsilon} + 1} \\
         \frac{1}{e^{\varepsilon/2} + 1} & \frac{e^{\varepsilon/2}}{e^{\varepsilon/2} + 1}  - \frac{1}{2} & \frac{1}{2} 
      \end{bmatrix},
\]
is the maximum element in the refinement order that is an anti-refinement of both $S_\varepsilon$ and $O_\varepsilon$.
\end{lemma}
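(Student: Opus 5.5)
The plan is to work entirely with $f$-DP trade-off functions. By the equivalence $T_M \leq T_N \iff M \refby N$ from \cite{fernandes2025composition}, the claim becomes a statement about functions. We must show two things:
\begin{enumerate}
\item $T_{M_\varepsilon} \leq T_{S_\varepsilon}$ and $T_{M_\varepsilon} \leq T_{O_\varepsilon}$ pointwise, so that $M_\varepsilon \refby S_\varepsilon$ and $M_\varepsilon \refby O_\varepsilon$;
\item every channel $N$ on two secrets with $N \refby S_\varepsilon$ and $N \refby O_\varepsilon$ satisfies $T_N \leq T_{M_\varepsilon}$.
\end{enumerate}
Since trade-off functions are convex, the natural candidate for $T_{M_\varepsilon}$ is the greatest convex minorant (lower convex envelope) of $\min(T_{S_\varepsilon}, T_{O_\varepsilon})$. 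This is the lattice meet described in \cite{fernandes2025composition}.

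Write $a = \frac{1}{e^{\varepsilon/2}+1}$ and $b = \frac{1}{e^{\varepsilon}+1}$, and note $b \leq a \leq 1/2$. From the earlier lemmas on SUE and OUE, $T_{S_\varepsilon}$ is the polyline through $(0,1),(a,a),(1,0)$. Similarly, $T_{O_\varepsilon}$ is the polyline through $(0,1),(b,1/2),(1,0)$.

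First I check that $M_\varepsilon$ is a genuine channel. Its rows sum to $1$, and its middle entries are nonnegative because $a \geq b$ and $1-a \geq 1/2$.

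Next I compute the trade-off function of a $2\times 3$ channel, generalizing Eqn~\eqref{eqn:alpha_beta}. Order the columns by increasing likelihood ratio (row 1 over row 2); by Neyman--Pearson, the trade-off curve is the polyline obtained by accumulating row-1 mass as $\alpha$ and row-1-complement mass as $\beta$. For $M_\varepsilon$ the ratios are:
\begin{itemize}
\item third column: $2b$;
\item middle column: $\frac{a-b}{1/2-a}$;
\item first column: $\frac{1-a}{a}$.
\end{itemize}
Accumulating columns in this order gives breakpoints $(b,1/2)$ and then $(b+(a-b), \tfrac12-(\tfrac12-a)) = (a,a)$. So $T_{M_\varepsilon}$ is the polyline $(0,1),(b,1/2),(a,a),(1,0)$. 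Its convexity must be checked via the slope inequalities
\[
-\tfrac{1}{2b} \leq \tfrac{a-1/2}{a-b} \leq -\tfrac{a}{1-a},
\]
which should reduce to elementary inequalities in $e^{\varepsilon/2}$.

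Now the comparison with $S_\varepsilon$ and $O_\varepsilon$. Each vertex of $T_{M_\varepsilon}$ lies on the graph of one of $T_{S_\varepsilon}, T_{O_\varepsilon}$ and on or below the other, because both of those are convex. It follows that the polyline $T_{M_\varepsilon}$ lies below both, which gives the two refinements in step 1.

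For maximality, let $N$ satisfy $N \refby S_\varepsilon$ and $N \refby O_\varepsilon$. Then $T_N$ is convex and $T_N \leq \min(T_{S_\varepsilon},T_{O_\varepsilon})$. In particular $T_N(b) \leq 1/2$ and $T_N(a) \leq a$, and always $T_N(0) \leq 1$ and $T_N(1)=0$. Convexity of $T_N$ then forces $T_N$ below every chord between these four points. Since $T_{M_\varepsilon}$ is exactly the piecewise-linear interpolation of these four points, we get $T_N \leq T_{M_\varepsilon}$, hence $N \refby M_\varepsilon$.

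I expect the only real obstacle to be the convexity (slope) check of $T_{M_\varepsilon}$. It is what guarantees the interpolation is a valid trade-off function and that the column ordering used above is correct. I would handle it by substituting $t=e^{\varepsilon/2}\geq 1$ and clearing denominators.
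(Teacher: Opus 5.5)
Your proposal takes a different route from the paper, which gives no argument of its own: it only points to the general meet construction of \cite{fernandes2025composition}. You instead verify the specific matrix directly, identifying $T_{M_\varepsilon}$ as the lower convex envelope of $\min(T_{S_\varepsilon},T_{O_\varepsilon})$, i.e.\ the polyline through $(0,1),(b,\tfrac12),(a,a),(1,0)$. Your column likelihood ratios are right. The convexity check you expect to be the obstacle is in fact immediate: with $t=e^{\varepsilon/2}$ the ratios are $\frac{2}{t^2+1}\le\frac{2t}{t^2+1}\le t$, and each inequality is equivalent to $t\ge 1$. Your maximality argument is also correct: a convex $T_N$ that is dominated at $\alpha=0,b,a,1$ lies below the polyline interpolating those values. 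What your route buys is a self-contained proof that also explains where the three columns of $M_\varepsilon$ come from, one per segment of the envelope. The paper's citation is shorter but leaves it to the reader to check that this particular matrix is the meet.

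One step needs repair. In step 1 you say each vertex of $T_{M_\varepsilon}$ lies on or below $T_{S_\varepsilon}$ and $T_{O_\varepsilon}$ ``because both of those are convex'', and conclude that the polyline lies below both. Convexity of $T_{S_\varepsilon},T_{O_\varepsilon}$ points the wrong way. A convex function lies \emph{below} its chords, so two points on or under its graph can be joined by a segment that passes above it (take $x^2$ at $x=\pm1$). What actually does the work is convexity of $T_{M_\varepsilon}$ itself, which you verify anyway:
\begin{itemize}
\item $T_{M_\varepsilon}$ agrees with $T_{S_\varepsilon}$ at all breakpoints of the latter ($\alpha=0,a,1$), and $T_{S_\varepsilon}$ is affine between them. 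Since a convex function lies below its chords, $T_{M_\varepsilon}\le T_{S_\varepsilon}$.
\item The same argument with the breakpoints $\alpha=0,b,1$ of $T_{O_\varepsilon}$ gives $T_{M_\varepsilon}\le T_{O_\varepsilon}$.
\end{itemize}
More simply, this half needs no trade-off functions at all. Merging columns 2 and 3 of $M_\varepsilon$ gives exactly $S_\varepsilon$, and merging columns 1 and 2 gives exactly $O_\varepsilon$. So $M_\varepsilon\refby S_\varepsilon$ and $M_\varepsilon\refby O_\varepsilon$ follow by post-processing with deterministic merge matrices.

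Finally, treat $\varepsilon=0$ separately. There $a=b=\tfrac12$, the middle column of $M_\varepsilon$ vanishes (its ratio is $0/0$), and all three channels are the null channel.
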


We leave the study of this mechanism to future work.

\vspace{0.2cm}
\noindent\textbf{Max-case refinement}

We may wonder if average-case refinement is too strong, and max-case refinement could be a better measure of security, especially given that $\varepsilon$ describes a worst-case attack. In fact, the following result is immediate from Lemma \ref{lem:equiv}.
\begin{corollary}\label{cor:coincide}
On 2x2 channels, average-case refinement and max-case refinement coincide.~\footnote{Here we use the condition for traditional max-case refinement, noting that it subsumes max-case for the $g$-leakage version used in this paper.}
\begin{proof}
Lemma \ref{lem:equiv} (2) is exactly the condition for max-case refinement, and we have shown that on 2x2 channels it coincides with average-case refinement.
\end{proof}
\end{corollary}

From this, it is also immediate that max-case refinement is preserved for the full mechanisms (defined on one-hot vectors).

\begin{corollary}
Let $C, D$ be 2x2 channels with $C \maxrefby D$. Then $C^{\otimes k} \maxrefby D^{\otimes k}$ and $C^{\otimes k}_{\mathrm{hot}} \maxrefby D^{\otimes k}_{\mathrm{hot}}$.
\begin{proof}
From Cor. \ref{cor:coincide}, $C \maxrefby D \implies C \refby D$, and from Lemma \ref{lem:kron_refinement} we have $C \refby D \implies C^{\otimes k}_{\mathrm{hot}} \refby D^{\otimes k}_{\mathrm{hot}}$. Finally, we know that average-case is strictly stronger than max-case in general (\cite{Chatziko2019}) and so $C^{\otimes k}_{\mathrm{hot}} \refby D^{\otimes k}_{\mathrm{hot}} \implies C^{\otimes k}_{\mathrm{hot}} \maxrefby D^{\otimes k}_{\mathrm{hot}}$.
\end{proof}
\end{corollary}

\section{Experimental Results} \label{sec:results}

In order to present an experimental evaluation of the theoretical results described in this work, we selected the Kosarak dataset \cite{kosarak-dataset}, a click-stream dataset, i.e., sequences of actions (clicks) performed by users while browsing a website, from a Hungarian news portal. The dataset contains 41,270 distinct actions and approximately 8 million actions performed by 990,002 users.

Due to computational constraints, we considered two subsets of the data. The first subset, used in Figure~\ref{fig:privacy_the_ue}, consists of 10,000 actions drawn from the 10 most frequent actions. The second subset, used in Figure~\ref{fig:utility_the_ue}, consists of 10,000 actions selected uniformly at random. The code used in the experiments is available on GitHub~\footnote{\url{https://github.com/ramongonze/ldp_qif}.}.

\subsection{Refinement of THE, OUE and SUE}

We start our analysis with a comparative study of information leakage among the THE, OUE, and SUE protocols.

\begin{figure}[h]
    \centering
    \includegraphics[width=1\linewidth]{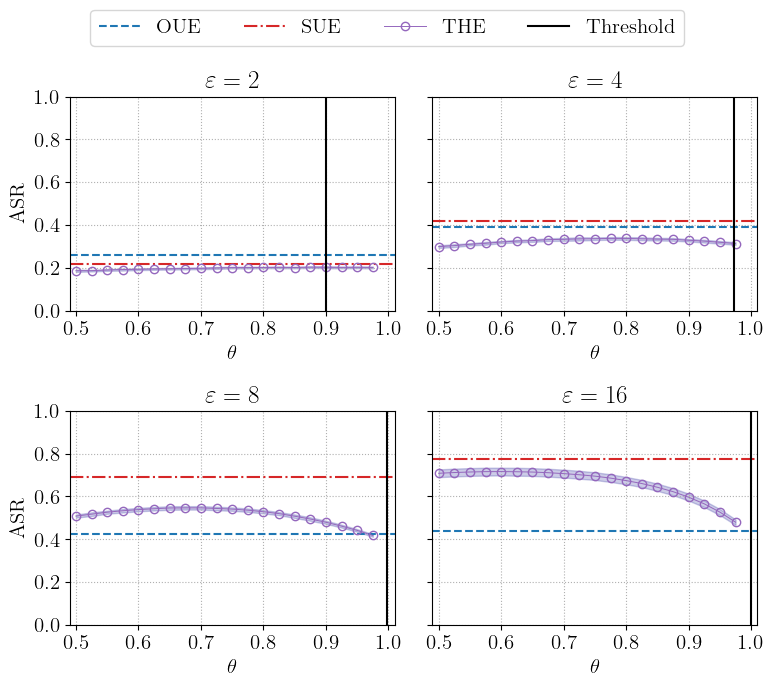}
    \caption{Comparison of ASR of THE, OUE and SUE protocols. The threshold vertical line corresponds to the one defined in Lemma~\ref{lem:theta_ineq}.}
    \label{fig:privacy_the_ue}
\end{figure}

We notice in Figure~\ref{fig:privacy_the_ue} that when $\theta$ is above the threshold defined in Lemma~\ref{lem:theta_ineq}, OUE leaks more than THE (a consequence of the refinement $O_{\varepsilon}\sqsubseteq T_{\varepsilon, \theta}$). However, below the threshold, there is no refinement, and this fact can be observed in the graphs of $\varepsilon = 8$ and $\varepsilon = 16$, where THE is leaking more information than OUE.

On the other hand, Lemma~\ref{lem:sue_ref_the} says that, no matter the gain function, SUE always leaks more than THE, and this fact is confirmed in all four graphs of Figure~\ref{fig:privacy_the_ue}.

As a second example, we turn our attention to Figure~\ref{fig:utility_the_ue}. We observe that OUE is always more useful than SUE, and that SUE, in turn, is always more useful than THE, for the values of $\varepsilon$ shown in the graphs. These experimental results corroborate the theoretical refinement relationships among THE, OUE, and SUE, as well as the results from~\cite{tianhao2017}.
Saying that a protocol $A$ refines a protocol $B$ means that $B$ leaks at least as much information as $A$. In terms of utility, this corresponds to $B$ being consistently more useful than $A$, but at the same time, less private.

\begin{figure}[h]
    \centering
    \includegraphics[width=1\linewidth]{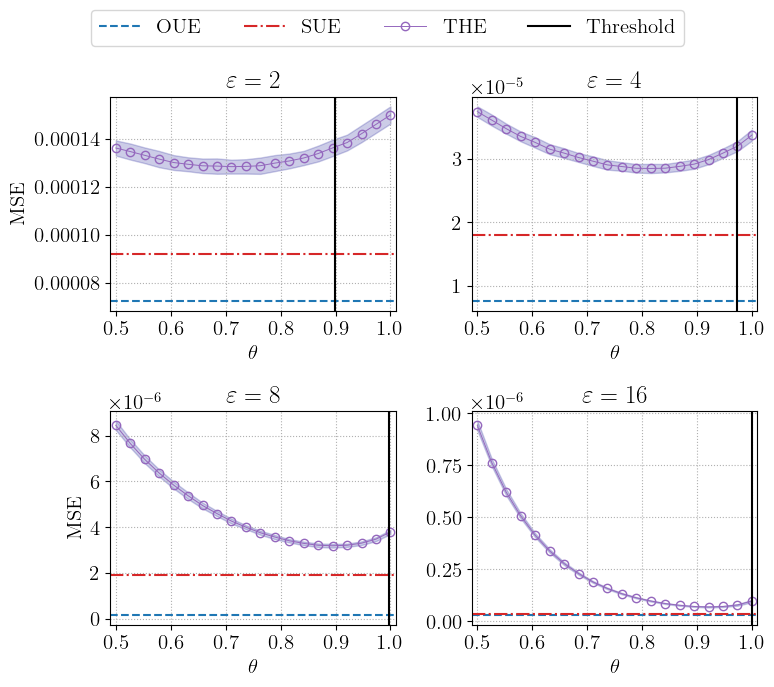}
    \caption{Comparison of the Mean Squared Error (MSE) of an adversary estimating the frequency of values in the data (i.e., the histogram) under the THE, OUE, and SUE protocols. The vertical threshold line corresponds to the value defined in Lemma~\ref{lem:theta_ineq}, and the shaded area around the empirical is the variance.}
    \label{fig:utility_the_ue}
\end{figure}

\subsection{Comparing Bayes Capacity of all LDP protocols}
We begin the analysis of Bayes capacities by examining the OUE and SUE curves in Figure~\ref{fig:capacity_all_protocols}. As shown in Lemma~\ref{lem:ouesue}, neither protocol refines the other, a fact reflected by the intersection of their curves at approximately $\varepsilon = 7$. In contrast, SUE refines THE for all values of $\varepsilon$, and therefore its Bayes capacity is always higher than that of THE.

The binary version of Local Hashing (BLH) maps a domain $[k]$ to a binary domain. Figure~\ref{fig:capacity_all_protocols} shows that the Bayes capacity of BLH is very small for all computed values of $\varepsilon$, indicating that the protocol provides strong privacy. However, this comes at the cost of utility, since most of the information is lost when the domain is reduced to a binary one.

In our experimental evaluation, we observed convergences of Bayes capacities for the LDP protocols analyzed in this work. In summary, the Bayes capacities of GRR, SS, SUE, and THE converge to $k$ as $\varepsilon$ increases; BLH converges to $2$ (because of $g=2$); and OLH and OUE converge to approximately $k/2$.



\begin{figure}[h]
    \centering
    \includegraphics[width=1\linewidth]{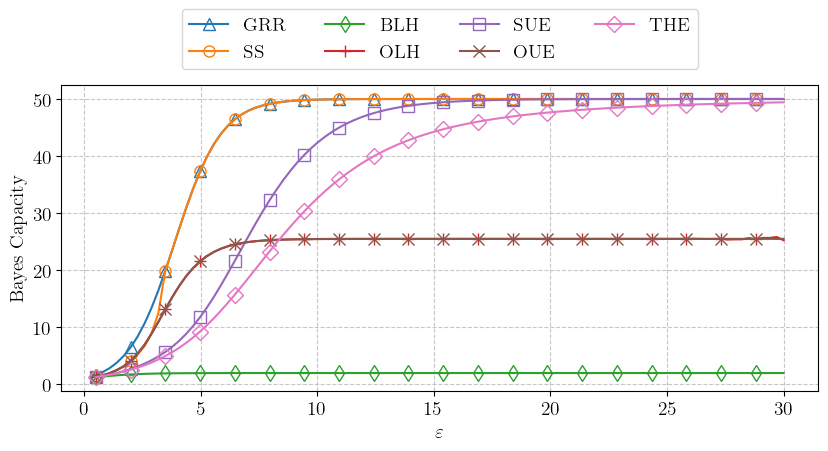}
    \caption{Comparison of the Bayes capacity of all protocols for $k\,{=}\,50$. For THE, $\theta\,{=}\,3/4$, for BLH $g\,{=}\,2$, and all other protocol-specific parameters are set to their optimal values.}
    \label{fig:capacity_all_protocols}
\end{figure}

\subsection{Analysis of Local Hashing}
We present now a comparison between Eqn~\eqref{eq:exp_asr_lh_wrong}, provided in \cite{Gursoy2022}, and our new formulation in Eqn~\eqref{eq:exp_asr_lh}.

\begin{figure}[h]
    \centering
    \includegraphics[width=1\linewidth]{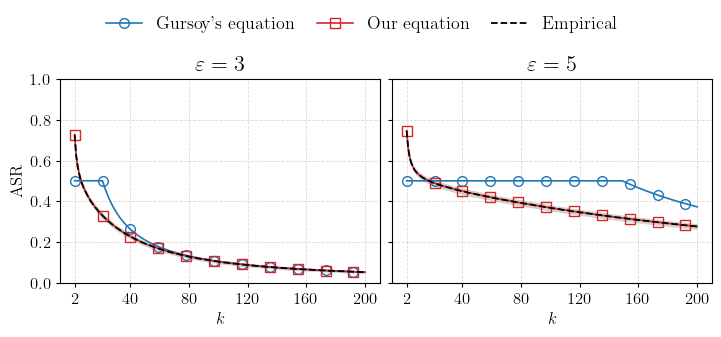}
    \caption{Comparison between Gursoy's~\cite{Gursoy2022} Eqn~\eqref{eq:exp_asr_lh_wrong} and our Eqn~\eqref{eq:exp_asr_lh}. The graph presents the ASR for $\varepsilon=3$ and $\varepsilon=5$ over different values of $k$. The experimental evaluation was repeated over 1000 iterations, simulating a dataset containing 1000 users.}
    \label{fig:comparison_gursoy}
\end{figure}

Figure~\ref{fig:comparison_gursoy} shows that for large values of $k$, the experimental evaluation closely matches both equations. For small $k$, however, Eqn~\eqref{eq:exp_asr_lh_wrong} can deviate substantially from the experimental results. The larger the $\varepsilon$, the larger is the $k$ for a good approximation of Eqn~\eqref{eq:exp_asr_lh_wrong}. In contrast, Eqn~\eqref{eq:exp_asr_lh} matches the experimental evaluation for all values of $\varepsilon$.

\section{Conclusion} \label{sec:conclusion}

In this paper, we have presented a principled and robust framework for the analysis and comparison of Local Differential Privacy (LDP) protocols through the lens of Quantitative Information Flow (QIF). By modeling LDP mechanisms as information-theoretic channels, we move beyond the traditional reliance on the single parameter $\varepsilon$ and utility-specific metrics, which often fail to capture the nuances of privacy protection against diverse adversarial models.

Our main contributions include the formal bridging of the LDP and QIF literatures, demonstrating that common inference attacks, such as data reconstruction, directly coincide with standard QIF notions of Bayes vulnerability. Furthermore, by leveraging the theory of channel refinement and Blackwell ordering, we have established a partial order that allows for metric-independent comparisons of protocols. This refinement-based approach has allowed us to uncover significant discrepancies in current LDP classifications; specifically, we have identified instances where protocols previously characterized as ``optimal'' in terms of estimation error are, in fact, strictly dominated by, or incomparable to, alternative mechanisms when viewed through a rigorous privacy-leakage lens.

Additionally, our channel-based treatment of seven state-of-the-art protocols enabled us to identify and correct analytical errors in existing literature~\cite{Gursoy2022} regarding the expected success rate of reconstruction attacks on Local Hashing (LH) mechanisms. The resulting QIF-based treatment provides the community with a more accurate understanding of the privacy-utility trade-offs inherent in these widely deployed systems.

\section{Perspectives for Future Research}
\label{sec:future_directions}

The intersection of QIF and LDP offers a fertile ground for several future research directions:

\textbf{Complex Data Collection Tasks:} While this paper focuses on frequency estimation, there is a critical need to extend QIF analysis to more complex tasks such as \textit{heavy hitter discovery}~\cite{Bassily2015} and \textit{multidimensional data analysis}~\cite{Arcolezi2022,Filho2023}. 
These tasks involve intricate secret spaces where the attacker's objectives can be much more sophisticated.
    
\textbf{Compositional Analysis:} A vital open question is the study of the \textit{compositional properties} of refinement relations. Investigating whether the dominance of one protocol over another is preserved under sequential or adaptive composition is essential for real-world deployments where users contribute data across multiple rounds.
    
\textbf{Expansion to Metric-LDP:} For applications involving location or numerical data, extending this framework to \textit{metric-based LDP}~\cite{ChatzikokolakisABP13} (e.g., Geo-indistinguishability \cite{AndresBCP13}) would be highly beneficial. This requires a formal treatment of ``closeness" between secrets within the QIF leakage metrics.

\textbf{Automated Formal Verification:} To mitigate the risk of analytical errors in manual proofs, it would be valuable to develop  \textit{automated verification tools}, such as probabilistic model checking or symbolic execution, to provide formal certification of refinement relations.

\section*{Author Contributions}
\noindent 
Natasha Fernandes and Ramon G. Gonze are the lead authors of this work and carried out the majority of the technical development. Nataliia Bielova and Héber H. Arcolezi initiated the project and developed the initial analyses, including the posterior-vulnerability formulation and the channel-matrix representations of LDP protocols. Natasha Fernandes formalized the analysis, establishing connections to $f$-DP, refinement relations, and Bayes capacity. Ramon G. Gonze developed the channel-matrix analysis for LH protocols, derived the corrected Bayes-capacity-based leakage analysis for LH protocols, and contributed to the implementation and experiments. Catuscia Palamidessi provided theoretical supervision and validated the formal results. All authors contributed to the writing and review of the paper.

\section*{Acknowledgments}
\noindent 
The work of Héber H. Arcolezi has been partially supported by the French National Research Agency (ANR), under contracts: ``ANR-24-CE23-6239'' and ``ANR-23-IACL-0006''. The work of Catuscia Palamidessi has been partially supported by the project ELSA of the HORIZON EUROPE Framework Programme   (project number 101070617). The collaboration between Natasha Fernandes and  Catuscia Palamidessi has been supported by the project IDEAL of the Inria ``Equipes Associées'' program.

%% file: A_appendix.tex

\subsection{Leakage Analysis using Bayes Capacity}
After experimentally observing a difference of ASR given by Equations~\eqref{eq:exp_asr_lh_wrong} and \eqref{eq:exp_asr_lh}, we investigated the derivation of Eq.~\eqref{eq:exp_asr_lh_wrong} in \cite{Gursoy2022}, and we found a mistake on it. The fraction 

\begin{align}
    \frac{\bbe\left[|\calu_{H_{\ell},x'_{\ell}}|\right] - 1}{\bbe\left[|\calu_{H_{\ell},x'_{\ell}}|\right]}\label{eq1}
\end{align}

\noindent in Eq.~(24) of the paper corresponds to: Given the true value and reported values are equal, and given the adversary will guess a random value from the subset $\calu_{H_{\ell},x'_{\ell}}$, this fraction is intended to be the adversary's expected chance of guessing \emph{wrongly} the true value. However, the correct way to write this expected chance is 

\begin{align}
    \bbe\left[
    1 - \frac{1}{|\calu_{H_{\ell},x'_{\ell}}|}
    \right],\label{eq2}
\end{align}

\noindent and we can see that Equations~\eqref{eq1} and \eqref{eq2} are not equivalent. 

\subsection{Additional Figures}

\begin{figure}[h]
    \centering \includegraphics[width=0.8\linewidth]{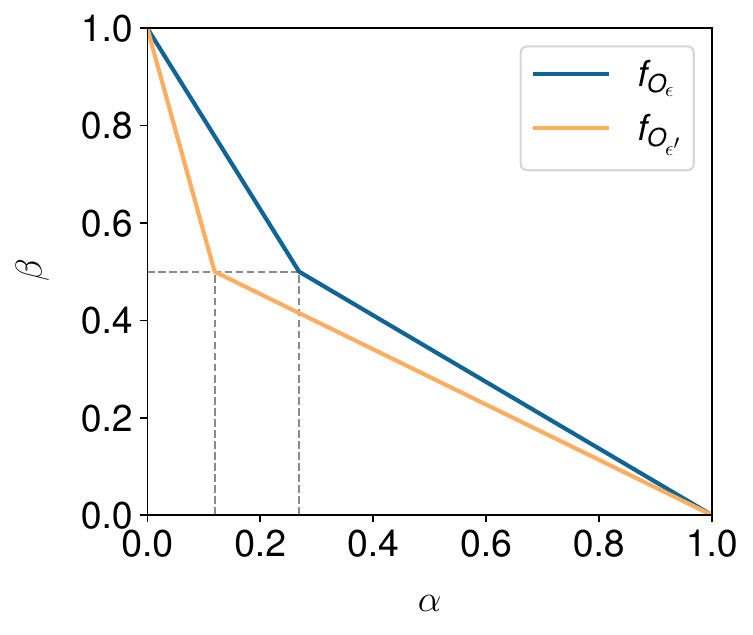}
    \caption{Example showing ordering between $f$-privacy trade-off functions for OUE mechanisms when $\alpha$ values are ordered, i.e. $\alpha > \alpha'$ but $\beta = \beta'$.}\label{fig:OUE_ref}
\end{figure}

\subsection{Proofs omitted from the main body of the paper}

\lemmaequiv*

    \begin{proof}
It is already known that (1) implies (2) (see \cite{alvim2020science}). We will prove that (2) implies (1), (2) implies (3) and (3) implies (2).

\noindent\textbf{(2) implies (1):} 

For this, we need to prove a final requirement for refinement, that there exists an averaging that takes the posteriors of $C$ onto the posteriors of $D$, averaged using the outer probabilities.

We write $u$ for the uniform prior. Note that we can write the posteriors of $u \triangleright C$ as:
{\renewcommand*{\arraystretch}{1.8}
\[
  \delta_{C1} =  \begin{spmatrix}{p_C}
     \frac{1-\alpha_C}{1 - \alpha_C + \beta_C} \\
     \frac{\beta_C}{1 - \alpha_C + \beta_C} 
   \end{spmatrix} 
      ~~~~~   
   \delta_{C2} =  \begin{spmatrix}{q_C}
     \frac{\alpha_C}{1 + \alpha_C - \beta_C} \\
     \frac{1 - \beta_C}{1 + \alpha_C - \beta_C} \\
   \end{spmatrix}
\]}
Similarly for $D$. Now, we are given that the posteriors are generated from a uniform prior, and therefore they must average to $u$. That is:
\begin{align}
	p_C{\cdot} \delta_{C1} + q_C{\cdot} \delta_{C2} = u \label{eqn:C_posteriors} \\	
	p_D{\cdot} \delta_{D1} + q_D{\cdot} \delta_{D2} = u  \label{eqn:D_posteriors} 
\end{align}

We are also given that the posteriors $\delta_{D1}, \delta_{D2}$ lie inside the convex hull formed by $\delta_{C1}, \delta_{C2}$. Therefore there must exist non-negative values $\lambda_1, \lambda_2, \gamma_1, \gamma_2$ such that:
\begin{align}
   \lambda_1 {\cdot} \delta_{C1} + \lambda_2 {\cdot} \delta_{C2} = \delta_{D1} \label{eqn:lambdas} \\
   \gamma_1 {\cdot} \delta_{C1} + \gamma_2 {\cdot} \delta_{C2} = \delta_{D2} \label{eqn:gammas} \\
   \lambda_1 + \lambda_2 = 1 \label{eqn:lambda_sum} \\
   \gamma_1 + \gamma_2 = 1 \label{eqn:gamma_sum}
\end{align}

Multiplying eqns (\ref{eqn:lambdas}), (\ref{eqn:gammas}) by $p_D, q_D$ gives:
\begin{align*}
    (p_D \lambda_1) {\cdot} \delta_{C1} + (p_D \lambda_2) {\cdot} \delta_{C2} = p_D \delta_{D1} \\
   (q_D \gamma_1) {\cdot} \delta_{C1} + (q_D \gamma_2) {\cdot} \delta_{C2} = q_D \delta_{D2} \\
\end{align*}

Summing gives:
\begin{align*}
    (p_D \lambda_1) {\cdot} \delta_{C1} + (p_D \lambda_2) {\cdot} \delta_{C2}~~ + \\
      (q_D \gamma_1) {\cdot} \delta_{C1} + (q_D \gamma_2) {\cdot} \delta_{C2}  &= p_D{\cdot}  \delta_{D1} + q_D {\cdot} \delta_{D2} \\
         &= p_C{\cdot}  \delta_{C1} + q_C{\cdot}  \delta_{C2} \\ 
         \qquad\qquad \mathrm{(from (\ref{eqn:C_posteriors}), (\ref{eqn:D_posteriors}))} 
\end{align*}

Since $\delta_{C1}, \delta_{C2}$ are linearly independent (as vectors), we must have that 
\begin{align*}
	p_D \lambda_1 + q_D \gamma_1 = p_C \\
	p_D \lambda_2 + q_D \gamma_2 = q_C 
\end{align*}

And from eqns (\ref{eqn:lambda_sum}), \ref{eqn:gamma_sum}) we have that
\begin{align*}
    p_D \lambda_1 + q_D \lambda_2 = p_D \\
    q_D \gamma_1 + q_D \gamma_2 = q_D
\end{align*}

Therefore we have shown an averaging of vectors $\delta_{C1}, \delta_{C2}$ onto vectors $\delta_{D1}, \delta_{D2}$, which is the additional requirement for refinement to hold.

\noindent \textbf{(2) implies (3):}

As above, we write the posteriors of $u \triangleright C$ as:
{\renewcommand*{\arraystretch}{1.8}
\[
  \delta_{C1} =  \begin{spmatrix}{p_C}
     \frac{1-\alpha_C}{1 - \alpha_C + \beta_C} \\
     \frac{\beta_C}{1 - \alpha_C + \beta_C} 
   \end{spmatrix} 
      ~~~~~   
   \delta_{C2} =  \begin{spmatrix}{q_C}
     \frac{\alpha_C}{1 + \alpha_C - \beta_C} \\
     \frac{1 - \beta_C}{1 + \alpha_C - \beta_C} \\
   \end{spmatrix}
\]}
and similarly for $D$. Since the posteriors of $u \triangleright D$ lie inside the convex hull of the posteriors of $u \triangleright C$, and these posteriors lie on a straight line (the probability simplex on 2 secrets), then we must have:
\begin{align*}
	&&   \frac{1 - \alpha_C}{1 - \alpha_C + \beta_C}    &\geq \frac{1 - \alpha_D}{1 - \alpha_D + \beta_D}  \\
	\implies && \frac{1 - \alpha_C + \beta_C}{1 - \alpha_C} &\leq \frac{1 - \alpha_D + \beta_D}{1 - \alpha_D} \\
	\implies && 1 + \frac{\beta_C}{1 - \alpha_C} &\leq 1+ \frac{\beta_D}{1 - \alpha_D} \\
	\implies && \frac{\beta_C}{1 - \alpha_C} &\leq \frac{\beta_D}{1 - \alpha_D}
\end{align*}
By a similar argument it is easy to show that $\frac{1 - \beta_C}{\alpha_C} \geq \frac{1 - \beta_D}{\alpha_D}$.

\noindent \textbf{(3) implies (2):}

This follows exactly the reverse argument of the above.
\end{proof}

\sufficiency*

\begin{proof}
Let $\alpha_C \leq \alpha_D$ and $\beta_C \leq \beta_D$. Then
\begin{align*}
  \frac{\beta_C}{1 - \alpha_C} &\leq  \frac{\beta_D}{1 - \alpha_C}  ~~&\mathrm{(Given ~ \beta_C \leq \beta_D) } \\
  & \leq \frac{\beta_D}{1 - \alpha_D} ~~&\mathrm{(since~ \frac{1}{1-\alpha_C} \leq \frac{1}{1 - \alpha_D})}
\end{align*}

Similarly we have that:
\begin{align*}
  \frac{1 - \beta_C}{\alpha_C} &\geq  \frac{1 - \beta_C}{\alpha_D}  ~~&\mathrm{(Given ~\alpha_C \leq \alpha_D)} \\
  & \geq \frac{1 -\beta_D}{\alpha_D} ~~&\mathrm{(since ~ 1-\beta_C \geq 1 - \beta_D)}
\end{align*}

The result follows from Lemma \ref{lem:equiv} (3).
\end{proof}

\kronrefinement*

\begin{proof}
    That $A \refby B$ implies $A^{\otimes k} \refby B^{\otimes k}$ was already shown in \cite{AlvimCCS23}.
    Now, $A \refby B$ means that there exists a witness $W$ such that $A{\cdot}W = B$. That is, $\sum_y A_{x, y} W_{y, z} = B_{x, z}$ for every row $x$ and column $z$ in $B$. But, if we remove row $x'$ from $A$, then we still have $\sum_y A_{x, y} W_{y, z} = B_{x, z}$ for every row $x$ in $B$ except for row $x'$ which has been removed from $A$. Therefore, if we remove the corresponding row $x'$ from $B$ then the equation $\sum_y A_{x, y} W_{y, z} = B_{x, z}$ still holds for all $x$ in $A$. In other words, we have refinement. And so we can conclude that $A^{\otimes k} \refby B^{\otimes k}$ implies $A^{\otimes k}_{\mathrm{hot}} \refby B^{\otimes k}_{\mathrm{hot}}$.   
\end{proof}

\neyman*

\begin{proof}
    Given a 2x2 channel of the form 
    \[
   C =  \begin{bmatrix}
        1 - \alpha & \alpha \\
        \beta & 1 - \beta 
    \end{bmatrix}
    \]
    with $1 - \alpha > \beta$,
    the Neyman-Pearson lemma gives that the most powerful test at significance level $\alpha$ has power 1 - $\beta$. Notice that this is true for $T_{\varepsilon, \theta}$ by construction. That is, we have $\alpha = q = \frac{1}{2} e^{\frac{-\varepsilon \theta}{2}}$ and $\beta = 1-p = \frac{1}{2}e^{\frac{\varepsilon (\theta - 1)}{2}}$. \\
    Next, if $\varepsilon \leq \varepsilon'$ then we can see that 
    \[
    	\frac{1}{2} e^{\frac{-\varepsilon \theta}{2}} \geq \frac{1}{2} e^{\frac{-\varepsilon' \theta}{2}}
\]
 and that 
 \[
 	\frac{1}{2}e^{\frac{\varepsilon (\theta - 1)}{2}} = \frac{1}{2}e^{\frac{-\varepsilon (1 - \theta )}{2}} \geq \frac{1}{2}e^{\frac{- \varepsilon' (1 - \theta )}{2}}~.
\] And so by Lemma \ref{lem:sufficiency} we have that $T_{\varepsilon', \theta} \refby T_{\varepsilon, \theta}$.
\end{proof}

\ouesue*

\begin{proof}
    We use the $f$-privacy trade-off functions to compare $O_{\epsilon}$ and $S_\epsilon$. This time we compare them using Lemma \ref{lem:equiv} since the condition for Lemma \ref{lem:sufficiency} does not hold. 
  For $S_\epsilon$ we have $\alpha_S = \beta_S = \frac{1}{e^{\epsilon/2} + 1}$ and for $O_\epsilon$ we have $\alpha_O = \frac{1}{e^\epsilon + 1}, \beta_O = 1/2$. Therefore we have:
\begin{align*}
  \frac{\beta_O}{1 - \alpha_O} -  \frac{\beta_S}{1 - \alpha_S} &= \frac{1/2}{e^\epsilon / (e^\epsilon + 1)} - \frac{1}{e^{\epsilon/2}} \\
          &= \frac{e^\epsilon + 1}{2 e ^\epsilon} - \frac{1}{e^{\epsilon/2}} \\
          &= \frac{e^\epsilon + 1 - 2 e^{\epsilon/2}}{2 e^\epsilon} \\
          &= \frac{(e^{\epsilon/2} - 1)^2}{2 e^\epsilon} \\
          &\geq 0
\end{align*}
Thus $\frac{\beta_O}{1 - \alpha_O} \geq \frac{\beta_S}{1 - \alpha_S}$.
Next we have that:
\begin{align*}
  \frac{1 - \beta_S}{\alpha_S} - \frac{1 - \beta_O}{\alpha_O} &= e^{\epsilon/2} - \frac{1/2}{1 / (e^\epsilon + 1} \\
     &= e^{\epsilon/2} - \frac{e^\epsilon + 1}{2} \\
     &= \frac{2 e^{\epsilon/2} - e^\epsilon - 1}{2} \\
     &= \frac{- (e^{\epsilon/2} - 1)^2}{2} \\
     &\leq 0
\end{align*}
And so $\frac{1 - \beta_S}{\alpha_S} \leq \frac{1 - \beta_O}{\alpha_O}$.
Thus the condition in Lemma \ref{lem:equiv} (3) does not hold, and so refinement does not hold between $S_\epsilon$ and $O_\epsilon$, except in the trivial case when equality holds, which means $\epsilon = 0$, corresponding to the channel that leaks nothing.
\end{proof}

\ouethe*

\begin{proof}
   We make use of Lemma \ref{lem:equiv} (3). We first recall the bitwise mechanisms for $T_{\varepsilon, \theta}$ and $O_{\varepsilon}$:

\[
   T_{\varepsilon, \theta} ~=~ \begin{bmatrix}
       1-\frac{1}{2} e^{\frac{-\varepsilon \theta}{2}} & \frac{1}{2} e^{\frac{-\varepsilon \theta}{2}} \\
       \frac{1}{2}e^{\frac{\varepsilon (\theta - 1)}{2}} & 1 - \frac{1}{2}e^{\frac{\varepsilon (\theta - 1)}{2}}
   \end{bmatrix} 
   ~~~
   O_{\varepsilon} ~=~
       \begin{bmatrix}
        \frac{e^{\varepsilon}}{e^{\varepsilon}+1} & \frac{1}{e^{\varepsilon}+1} \\
        1/2 & 1/2
    \end{bmatrix}
\]
We note that these are in the right format to apply Defn~\ref{def:alpha_beta} (by construction). We will label the tradeoff points $(\alpha_T, \beta_T)$ and $(\alpha_O, \beta_O)$ for $T_{\varepsilon, \theta}$ and $O_{\varepsilon}$ respectively. Now, considering the second inequality, we have
\begin{equation}\label{oue_1}
    \frac{1 - \beta_O}{\alpha_O} ~=~ \frac{\frac{1}{2}}{1/(e^\varepsilon + 1)} ~=~ \frac{e^\varepsilon + 1}{2}
\end{equation}
\begin{equation}\label{the_1}
    \frac{1 - \beta_T}{\alpha_T} ~=~ \frac{1 - \frac{1}{2} e^{-\varepsilon(1-\theta)/2}}{\frac{1}{2} e^{-\varepsilon \theta / 2}} ~=~ e^{\varepsilon \theta/2}(2 - e^{-\varepsilon(1 - \theta)/2})
\end{equation}
And so we deduce:
\begin{align*}
    \eqref{oue_1} - \eqref{the_1} &=~ \frac{e^\varepsilon + 1}{2} - e^{\varepsilon \theta/2}(2 - e^{-\varepsilon(1 - \theta)/2}) \\
    &=~ e^{\varepsilon\theta} e^{-\varepsilon/2} - 2e^{\varepsilon\theta/2} + \frac{e^\varepsilon + 1}{2} 
\end{align*}
Now, let $x = e^{\varepsilon\theta/2}$. And so we continue,
\begin{align*}
    \qquad &=~ e^{-\varepsilon/2}x^2 -2x + \frac{e^\varepsilon+1}{2} \\
    &=~ e^{-\varepsilon/2} \Bigl( x^2 - 2e^{\varepsilon/2}x + e^{\varepsilon/2}(\frac{e^\varepsilon + 1}{2}) \Bigr) \\
    &=~ e^{-\varepsilon/2} \Bigl( (x-e^{\varepsilon/2})^2 + e^{\varepsilon/2}(\frac{e^\varepsilon + 1}{2}) - e^\varepsilon \Bigr) 
\end{align*}
The quadratic term is always positive, so we consider the term on the RHS. Letting $y = e^{\varepsilon/2}$ we have that
\begin{align*}
    e^{\varepsilon/2}(\frac{e^\varepsilon + 1}{2}) - e^\varepsilon ~&=~ y(\frac{y^2 + 1}{2}) - y^2 \\
    &=~ \frac{1}{2} (y^3 - 2y^2 + y) \\
    &=~ \frac{1}{2} y (y-1)^2 \\
    &\geq 0 \textrm{ for } y \geq 0
\end{align*}
And so we have shown that $\eqref{oue_1} - \eqref{the_1} \geq 0$, thus we have the second condition of Lemma \ref{lem:equiv} (3) that holds. We need now to check the first inequality. We begin:
\begin{equation}\label{oue_next}
    \frac{\beta_O}{1 - \alpha_O} ~=~ \frac{e^\varepsilon + 1}{2e^\varepsilon} 
\end{equation}
\begin{align}
  \frac{\beta_T}{1 - \alpha_T} ~&=~ \frac{\frac{1}{2} e^{-\varepsilon(1 - \theta)/2}}{1 - \frac{1}{2} e^{-\varepsilon \theta/2}} \nonumber \\
  &=~ \frac{e^{-\varepsilon(1 - \theta)/2}}{2 - e^{-\varepsilon\theta/2}} \nonumber \\
  &=~ \frac{e^{\varepsilon\theta}e^{-\varepsilon/2}}{2e^{\epsilon\theta/2} - 1} \nonumber \\
  &=~ \frac{x^2 e^{-\varepsilon/2}}{2x - 1} 
\end{align}
where in the last line we set $x = e^{\varepsilon\theta/2}$. Then,
\begin{align*}
    \frac{\beta_T}{1 - \alpha_T} - \frac{\beta_O}{1 - \alpha_O} ~&=~ \frac{x^2 e^{-\varepsilon/2}}{2x - 1} - \frac{e^\varepsilon + 1}{2e^\varepsilon} \\
    &=~ \frac{2e^{\varepsilon/2}x^2 - x(2e^\varepsilon + 2) + (e^\varepsilon + 1)}{2e^\varepsilon (2x - 1)} 
\end{align*}
This is $\geq 0$ when the numerator is $\geq 0$ (since the denominator is always positive). The numerator is quadratic in $x$, so solving gives
\begin{align*}
    x \geq \frac{(2e^\varepsilon + 2) + \sqrt{(2e^\varepsilon + 2)^2 - 8e^{\varepsilon/2}(e^\varepsilon + 1)}}{4 e^{\varepsilon/2}} \textrm{ or } \\
    x \leq \frac{(2e^\varepsilon + 2) - \sqrt{(2e^\varepsilon + 2)^2 - 8e^{\varepsilon/2}(e^\varepsilon + 1)}}{4 e^{\varepsilon/2}}
\end{align*}
We simplify the expression inside the square root:
\begin{align*}
    &(2e^\varepsilon + 2)^2 - 8e^{\varepsilon/2}(e^\varepsilon + 1) \\
    &~~=~ 4 + 4e^{2\varepsilon} + 4e^{\varepsilon} - 8e^{3\varepsilon/2} - 8e^{\varepsilon/2} \\
    &~~=~ 4(1 + e^{2\varepsilon} + e^{\varepsilon} - 2e^{3\varepsilon/2} - 2e^{\varepsilon/2} \\
    &~~=~ 4\Bigl( (e^{\varepsilon} + 1)^2 - 2e^{\varepsilon/2}(e^\varepsilon + 1) \Bigr) \\
    &~~=~ 4(e^{\varepsilon} + 1)(e^{\varepsilon/2} - 1)^2
\end{align*}
And so, substituting $x = e^{\varepsilon\theta/2}$ we have that
\begin{align*}
    & e^{\varepsilon\theta/2} \geq \frac{e^{\varepsilon} + 1 + (e^{\varepsilon/2} - 1)\sqrt{e^\varepsilon+1}}{2e^{\varepsilon/2}} \\
    \implies & \theta \geq \frac{2 \ln{(e^{\varepsilon} + 1 + (e^{\varepsilon/2} - 1)\sqrt{e^\varepsilon+1})} - \varepsilon - 2\ln2}{\varepsilon}
\end{align*}
and
\begin{align*}
    & e^{\varepsilon\theta/2} \leq \frac{e^{\varepsilon} + 1 - (e^{\varepsilon/2} - 1)\sqrt{e^\varepsilon+1}}{2e^{\varepsilon/2}} \\
    \implies & \theta \leq \frac{2 \ln{(e^{\varepsilon} + 1 - (e^{\varepsilon/2} - 1)\sqrt{e^\varepsilon+1})} - \varepsilon - 2\ln2}{\varepsilon}
\end{align*}
\end{proof}

\thesue*

\begin{proof}
We make use of Lemma \ref{lem:equiv} (3). We have:
\begin{align*}
  \frac{\beta_T}{1 - \alpha_T} - \frac{\beta_S}{1 - \alpha_S} &= \frac{1/2 e^{-\epsilon(1 - \theta)/2}}{1 - 1/2 e^{-\epsilon\theta/2}} - \frac{1}{e^{\epsilon/2}} \\
   &= \frac{e^{-\epsilon(1 - \theta)/2}}{2 - e^{-\epsilon \theta/2}} - \frac{1}{e^{\epsilon/2}} \\
   &= \frac{e^{\epsilon/2}(e^{-\epsilon(1-\theta)/2}) - (2 - e^{-\epsilon\theta/2})}{e^{\epsilon/2}(2 - e^{-\epsilon\theta/2})} \\
   &= \frac{e^{\epsilon \theta / 2} - 2 + e^{-\epsilon \theta / 2}}{e^{\epsilon/2}(2 - e^{-\epsilon\theta/2})} \\
   &= \frac{e^{\epsilon\theta} - 2e^{\epsilon\theta/2} + 1}{e^{\epsilon(1 - \theta)/2}(2 - e^{-\epsilon\theta/2})} \\
   &= \frac{(e^{\epsilon\theta/2} - 1)^2}{e^{\epsilon(1 - \theta)/2}(2 - e^{-\epsilon\theta/2})} \\
   &\geq 0
\end{align*}
And finally,
\begin{align*}
  \frac{1 - \beta_S}{\alpha_S} - \frac{1 - \beta_T}{\alpha_T} &= e^{\epsilon/2} - \frac{1 - 1/2 e^{-\epsilon(1 - \theta)/2}}{1/2 e^{-\epsilon\theta/2}} \\
  &= e^{\epsilon/2} - e^{\epsilon\theta/2}(2 - e^{-\epsilon(1 - \theta)/2}) \\
  &= e^{\epsilon/2} - 2 e^{\epsilon\theta/2} + e^{-\epsilon/2}e^{\epsilon\theta} \\
  &= \frac{e^{\epsilon\theta} - 2e^{\epsilon\theta/2}e^{\epsilon/2} + e^\epsilon}{e^{\epsilon/2}} \\
  &= \frac{(e^{\epsilon\theta/2} - e^{\epsilon/2})^2}{e^{\epsilon/2}} \\
  &\geq 0
\end{align*}
And so by Lemma \ref{lem:equiv} we have that $S_{\epsilon} \sqsubseteq T_{\epsilon, \theta}$.
\end{proof}

\asrlh*
\begin{proof}



\noindent We are going to derive the  posterior vulnerability of the $\lhm$ channel that corresponds exactly to the ASR. The channel $\lhm^{\varepsilon} = EG$ can be decomposed in the cascade of channels $E$ and $G$:
\begin{enumerate}
    \item First encode $x$ (channel $E$).
    \item Apply GRR to the encoded value (channel $G$).
\end{enumerate}

\noindent\textbf{Definition of $E$}: The channel is defined as $E\,{:}\,\mathcal{X}\,{\rightarrow}\,\mathbb{D}\mathcal{Y}$, where $\mathcal{Y}\,{=}\,\{(y_h,y_e)~|~y_h\in\mathscr{H} \text{ and } y_e\in[g]\}$. The channel receives a secret $x\in\mathcal{X} = [k]$ as input and outputs a pair $y = (y_h,y_e)$, where $y_h = \{(1,w_i), \ldots, (k,w_k)~|~w_k \in [g]\}$ is a hash function chosen uniformly at random from $\mathscr{H}$ and $y_e$ is the encoded value of $x$. The channel matrix is then defined as

\begin{align}
    {E}_{x,y} = \pr[(y_h,y_e)\,|\,x] = \begin{cases}
        \frac{1}{g^{k}} & \text{ , if $y_h(x) = y_e$}\\
        0 & \text{ , if $y_h(x) \neq y_e$}
    \end{cases}
\end{align}

\noindent\textbf{Definition of $G$}: The channel is defined as $G : \mathcal{Y} \rightarrow \mathbb{D}\mathcal{Z}$, where $\mathcal{Z} = \mathcal{Y}$. The channel receives a pair $y = (y_h,y_e)$ as input and outputs a pair $z = (z_h,z_p)$, where $z_h = y_h$ is the hash function received as input and $z_p = GRR(y_e)$ is the output of GRR for the encoded value of $x$ (aka perturbed version of $y_e$). The definition is as follows

\begin{align}
    G^{\varepsilon}_{y,z} = \begin{cases}
        \frac{e^\varepsilon}{e^\varepsilon + g - 1} & \text{ , if $y_h = z_h$ and $y_e = y_z$}\\
        \frac{1}{e^\varepsilon + g - 1} & \text{ , if $y_h = z_h$ and $y_e \neq y_z$}\\
        0 & \text{ , if $y_h \neq z_h$}.
    \end{cases}
\end{align}

\noindent\textbf{Definition of $\lhm^{\varepsilon}$}: The channel is defined as a cascading $\lhm^{\varepsilon} = EG$, where 
\begin{align*}
    \lhm^{\varepsilon}_{x,z} &= \sum\limits_{y} E_{x,y} \cdot G^{\varepsilon}_{y,z}\\
    &= \begin{cases}
        \frac{1}{g^k}\cdot\frac{e^\varepsilon}{e^\varepsilon+g-1} & \text{ , if $z_h(x) = z_p$}\\
        \frac{1}{g^k}\cdot\frac{1}{e^\varepsilon+g-1} & \text{ , if $z_h(x) \neq z_p$}.\\
    \end{cases}\\
\end{align*}

\noindent In short, $\lhm^{\varepsilon}$ is a channel that, given an input $x\in\mathcal{X}$, maps it to a pair $z = (z_h,z_p)$ where $z_h$ is a hash function chosen randomly and $z_p = GRR(z_h(x))$.

\noindent \textbf{Posterior vulnerability:}
Let $g$ be the Bayes gain function such that the set of actions $\mathcal{W}\,{=}\,\mathcal{X}$ and $g(w,x)\,{=}\,1 \text{ if } w\,{=}\,x \text{ or } 0 \text{ otherwise}$. We have that

\begin{align}
    V_g[\pi \rhd \lhm^{\varepsilon}] &= \sum\limits_{z} \max\limits_{w} \sum\limits_{x}\pi_x \cdot \lhm^{\varepsilon}_{x,z}\cdot g(w,x)\nonumber\\
    \intertext{Given that $g(w,x) \neq 0 \iff w = x$, we can eliminate the inner sum:}
    V_g[\pi \rhd \lhm^{\varepsilon}] &= \sum\limits_{z} \max\limits_{x} \left(\frac{1}{k} \cdot \lhm^{\varepsilon}_{x,z}\right) \nonumber\\
    \intertext{We can expand the sum on $z$ by each element of the pair $(z_h,z_p)$:}
    &= \frac{1}{k} \sum\limits_{z_p=1}^{g} \sum\limits_{z_h\,\in\,\mathscr{H}}  \max\limits_{x} \lhm^{\varepsilon}_{x,(z_h,z_p)} \nonumber\\
    \intertext{Once there is a fixed $z_p$, there are two possible values for $\max\limits_{x}\lhm^{\varepsilon}_{x,(z_h,z_p)}$: $\frac{p}{g^k}$ and $\frac{q}{g^k}$. The last one happens when there is no $x$ such that $z_h(x)\,{=}\,z_p$, so $\left|\{(1,e_1), \ldots, (k,e_k) | e_i \in [g]\backslash y_s\}\right| = (g-1)^k$. The first one happens when there is at least one $x$ such that $z_h(x) = z_p$, thus $\left|\{(1,e_1), \ldots, (k,e_k) | e_i \in [g] \text{ and } \exists j : e_j = z_s\}\right| = g^k\,{-}\,(g{-}1)^k$. We can split the second summation $\sum_{z_h}$ among these two case as following:}\nonumber\\
    V_g[\pi \rhd \lhm^{\varepsilon}] &= \frac{1}{k} \sum\limits_{z_p=1}^{g} \left( \frac{p}{g^k}(g^k-(g-1)^k) + \frac{q}{g^k}(g-1)^k \right)\nonumber\\
    &= \frac{p(g^k - (g-1)^k) + q(g-1)^k}{kg^k} \sum\limits_{z_p=1}^{g} 1 \nonumber\\
    &= \frac{(pg^k - p(g-1)^k + q(g-1)^k)g}{kg^k} \nonumber\\
    &= \frac{pg^k + (g-1)^k(q-p)}{kg^{k-1}} \nonumber\\
    &= \frac{e^\varepsilon g^k + (g-1)^k (1-e^\varepsilon)}{(e^\varepsilon + g -1)(kg^{k-1})}. \nonumber
\end{align}

\end{proof}